\let\c@theorem\relax
\newtheorem{theorem}{Theorem}
\newtheorem{lemma}[theorem]{Lemma}
\newtheorem{corollary}[theorem]{Corollary}
\newtheorem{observation}[theorem]{Observation}
\newtheorem{definition}[theorem]{Definition}
\title{Angles of Arc-Polygons and Lombardi Drawings of Cacti}
\author{David Eppstein\thanks{Department of Computer Science, University of California, Irvine, {\tt \{eppstein,dfrishbe,mosegued\}@uci.edu}}
        \and Daniel Frishberg$^*$ \href{https://orcid.org/0000-0002-1861-5439}{\includegraphics[height=1em]{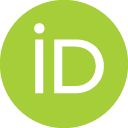}} \and Martha C. Osegueda$^*$ \href{https://orcid.org/0000-0002-1077-1074}{\includegraphics[height=1em]{ORCIDiD_icon128x128}}}
\begin{document}
\thispagestyle{empty}
\maketitle

\begin{abstract}
We characterize the triples of interior angles that are possible in non-self-crossing triangles with circular-arc sides, and we prove that a given cyclic sequence of angles can be realized by a non-self-crossing polygon with circular-arc sides whenever all angles are $\le\pi$. As a consequence of these results, we prove that every cactus has a planar Lombardi drawing (a drawing with edges depicted as circular arcs, meeting at equal angles at each vertex) for its natural embedding in which every cycle of the cactus is a face of the drawing. However, there exist planar embeddings of cacti that do not have planar Lombardi drawings.
\end{abstract}

\section{Introduction}
Artist Mark Lombardi drew beautiful diagrams of international political and financial conspiracies, often using curved edges and circular layouts~\cite{HobRic-03}. His name is commemorated in Lombardi drawing, a style of graph drawing in which the edges are drawn as circular arcs that meet at equal angles at each vertex~\cite{DunEppGoo-DCG-13,DunEppGoo-JGAA-12}. Many kinds of graph are now known to have Lombardi drawings, including for instance all 2-degenerate graphs, the graphs that can be formed from a single edge by repeatedly adding a new vertex incident to at most two previous vertices~\cite{DunEppGoo-JGAA-12}. Beyond their aesthetic quality, these drawings can be considerably more compact than straight-line drawings of the same graphs~\cite{DunEppGoo-DCG-13}, and they have found application in the realization of soap bubble foams~\cite{Epp-DCG-14} and in the visualization of knots and links~\cite{KinKobLof-JoCG-19}.

In this style of drawing, it is of interest, when possible, to avoid any crossings or intersections of edges other than at shared endpoints, forming a planar drawing~\cite{DunEppGoo-JoCG-18}. Obviously, this requires that the graph to be drawn be planar, but not every planar graph has a planar Lombardi drawing~\cite{DunEppGoo-JGAA-12}. The planar graphs known to have planar Lombardi drawings include the trees~\cite{DunEppGoo-DCG-13}, the subcubic planar graphs~\cite{Epp-DCG-14}, the 4-regular polyhedral graphs~\cite{Epp-DCG-14}, the Halin graphs~\cite{DunEppGoo-JGAA-12}, and the outerpaths~\cite{DunEppGoo-JoCG-18}. However, examples of planar graphs with no planar Lombardi drawing have been found for graph classes including the 4-regular graphs~\cite{Epp-DCG-14}, planar 3-trees~\cite{DunEppGoo-JoCG-18}, and planar bipartite graphs~\cite{Epp-CCCG-19}. In addition, for planar graphs with a fixed choice of embedding, a planar Lombardi drawing might not exist even when the given graph is series-parallel~\cite{Epp-CCCG-19}.

For some of the simplest classes of planar graphs (notably, the outerplanar graphs) the existence of planar Lombardi drawings has remained unknown. In this work, we tackle an even simpler class of graphs, the cacti. Intuitively, a cactus is a tree of cycles; it can be defined as a graph in which each edge belongs to at most one cycle. These graphs have a natural class of planar embeddings in which their cycles form faces of the embedding, with the rest of the graph always drawn outside of the cycle. In this paper, we show that these embeddings of cacti always have Lombardi drawings. However, we find examples of other embeddings of cacti (including one as simple as a triangle with four leaf vertices attached to each triangle vertex) that have no planar Lombardi drawing.

The cycles of any planar Lombardi drawing form simple closed curves in the plane, with sides composed of circular arcs; we call these shapes arc-polygons. The constraints of a Lombardi drawing translate into constraints on the vertex angles of these arc-polygons, and naturally raise the question of which systems of angles can be realized by an arc-polygon and what other geometric properties their realizations have. For instance, the analysis of arc-quadrilaterals with equal angles at all vertices, and the key property of these arc-quadrilaterals that their vertices are all cocircular, figured heavily into our previous work on Lombardi drawings of 4-regular graphs~\cite{Epp-DCG-14}, planar bipartite graphs, and embedded series-parallel graphs~\cite{Epp-CCCG-19}. Here, we focus on arc-triangles, the simplest (and therefore most highly constrained) shapes needed for the faces of Lombardi drawings of cacti. We completely characterize the triples of angles that can be realized by simple arc-triangles; this characterization forms the basis of our proof that some embedded cacti have no planar Lombardi drawing. We also use the same characterization to prove a natural sufficient condition for the existence of a simple arc-polygon with specified interior angles: such an arc-polygon exists whenever all of the specified angles are at most $\pi$. Larger angles than this are not needed for the Lombardi drawings of the natural embeddings of cacti, from which it follows that all cacti have planar Lombardi drawings for their natural embeddings.

\subsection{New results}
The main results that we prove in this work are the following.
\begin{itemize}
\item We completely characterize the triples of angles that can be realized as the internal angles of arc-triangles, in terms of a system of linear inequalities on the angles (\cref{thm:triangle}).
\item We prove that every cyclic sequence of three or more angles, all of which are in the range $[0,\pi]$, except for the triple $(0,0,\pi)$, can be realized as the internal angles of an arc-polygon (\cref{thm:up-to-pi}).
\item We prove that every cactus graph has a planar Lombardi drawing in its natural planar embedding, the embedding in which every cycle is a face (\cref{thm:natural-cactus}).
\item We find an embedded cactus graph that does not have a planar Lombardi drawing for that embedding (\cref{thm:bad-hat}).
\end{itemize}

\subsection{Related work}
As well as in Lombardi drawing, circular arcs have been incorporated in other ways into graph drawing; see for instance~\cite{CheDunGoo-DCG-01,EfrErtKob-JGAA-07,AngEppFra-JGAA-14,Schu-JGAA-15,CarHofKus-CG-18}. Force-directed methods can often achieve good but not perfect angular resolution for arc-based graph drawings~\cite{CheCunGoo-GD-11,BanEppGoo-GD-12,DonFuXu-CVS-13}, and the effectiveness of curved edges in graph drawing has been tested through user studies~\cite{XuRooPas-TVCG-12,PurHamNol-GD-12,Dud-PhD-16}.

The cactus graphs whose drawings we study here are an old and well-studied class of graphs~\cite{Hus-JCP-50,HarUhl-PNAS-53}. Cactus graphs are one of the most basic minor-closed families of graphs, defined (as simple graphs) by the absence of a diamond graph minor or (as multigraphs) by the absence of a 3-edge dipole graph minor. In graph drawing, cactus graphs play a key role in the proof that polyhedral graphs have greedy embeddings~\cite{LeiMoi-DCG-10}, in the visualization of minimum cuts~\cite{BraCorFie-CGTA-04}, and in the approximation of maximum planar subgraphs~\cite{CalFerFin-Algs-98}. The question of whether cactus graphs have Lombardi drawings was explicitly posed by our previous work on the non-existence of Lombardi drawings for certain bipartite planar graphs~\cite{Epp-CCCG-19}; this question was the main motivation for our present work.

\begin{figure}[t]
\centering\includegraphics[width=0.3\columnwidth]{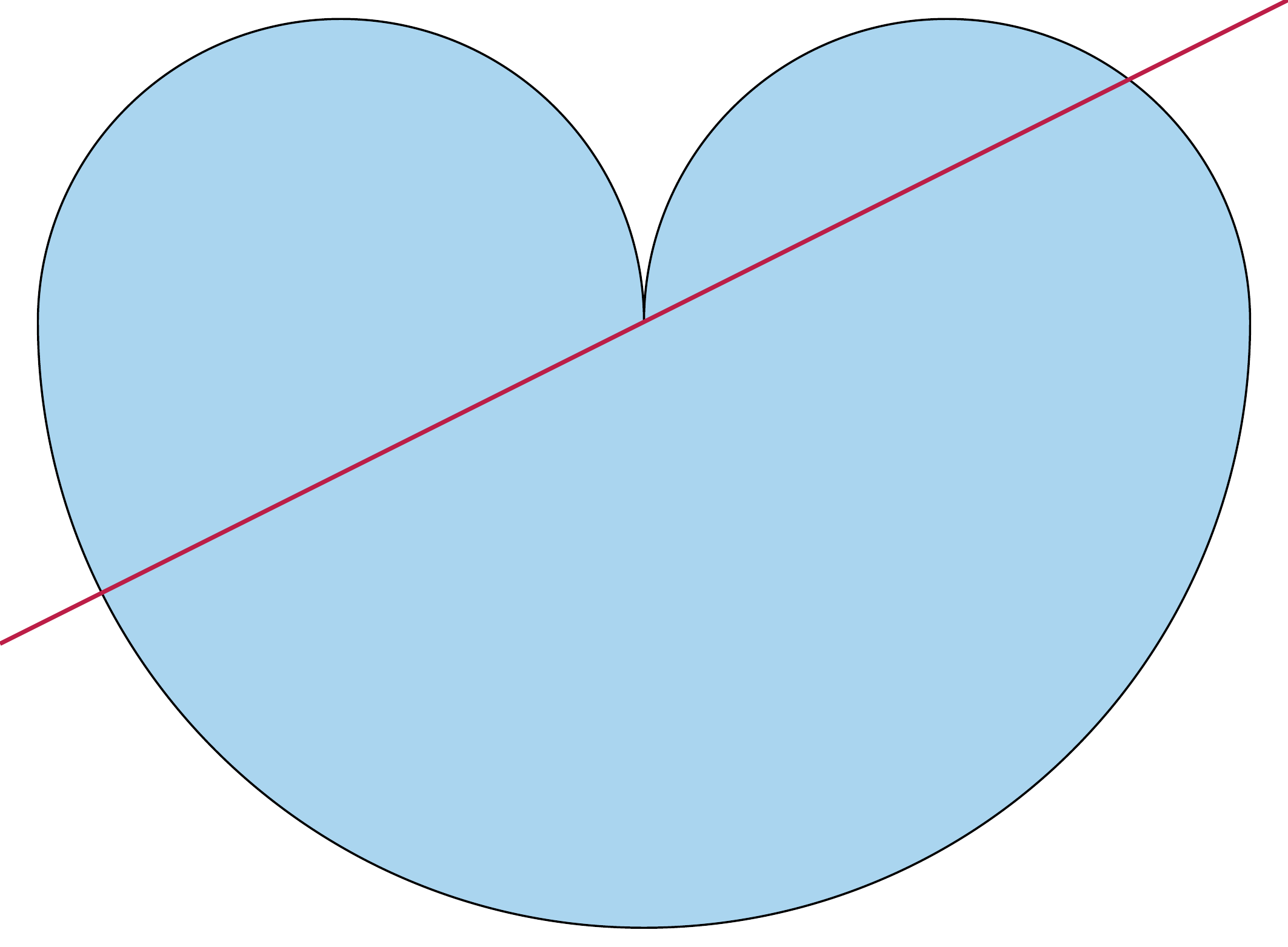}
\caption{Roger Boscovich's arc-triangle formed from three semicircles has the property that every line through its cusp partitions the perimeter into two equal lengths.}
\label{fig:boscovich}
\end{figure}

Beyond our work's contributions to graph drawing, we believe that it contributes to the fundamental study of arc-polygons. These are a natural and important class of two-dimensional shapes whose long history of study can be traced back to the work of Archimedes and Pappus on the arbelos, an arc-triangle formed by three semicircles on the same side of a line, to the use of the Reuleaux triangle in Gothic architecture and by Leonardo da Vinci for map projection and fortress floor plans, and to the work of 18th-century mathematician Roger Boscovich on shapes that have a center through which every line bisects the perimeter (\cref{fig:boscovich})~\cite{BanGib-AMM-94}. It has been stated that ``more than 90\% of machined parts'' have arc-polygon shapes, because of their ease of manufacturing~\cite{CheVenWu-IVC-96}, and arc-polygons have been used to model the shapes for irregular parts to be packed into and cut from metal sheets~\cite{PlaTseShy-ICTME-20}. Arc-polygons can accurately approximate arbitrary smooth curves~\cite{Fej-BAMS-48,MeeWal-JCAM-95}, and their approximations of non-smooth curves form a useful stepping stone to the Riemann mapping theorem~\cite{Kel-BAMS-26}. Aichholzer et al.~\cite{AicAurHac-IJCGA-11} argue that for approximating irregular shapes in this way, arc-polygons have significant advantages over straight-sided polygons in allowing more concise and accurate representations while still allowing efficient computations of basic geometric primitives such as medial axes. For additional work on the computational geometry of these shapes see~\cite{AicAigAur-JGAA-15,WeiJutAur-EuroCG-18,WanLinFan-CAD-17}.

\section{Preliminaries}
\subsection{Arc-polygons}
\begin{definition}
We define an \emph{arc-polygon} to be a cyclic sequence that alternates between points and closed circular arcs in the Euclidean plane, with each arc appearing in the sequence consecutively with its two endpoints. (Here, we allow line segments to count as a degenerate special case of circular arcs.) The points of the arc-polygon are called its \emph{vertices} and the arcs are called its \emph{edges}. In particular, an \emph{arc-triangle} is an arc-polygon with three vertices and three edges. An arc-polygon is \emph{simple} if the only points of intersection between pairs of its edges are shared vertices consecutive with both edges. The union of the arcs in a simple arc-polygon forms a Jordan curve, which separates the plane into two components, the \emph{interior} and \emph{exterior} of the arc-polygon. We define the \emph{interior angle} of a vertex of a simple arc-polygon to be the angle between tangent lines to its two arcs at that vertex, spanning the interior of the arc-polygon in a (possibly empty) neighborhood of the vertex.

Unlike classical straight-sided polygons, it is also possible to form arc-polygons with only two vertices and two sides.
As in \cite{Epp-CCCG-19}, we call these \emph{bigons}.
\end{definition}

A bigon with two different arcs as sides is automatically simple: two different circles can cross in at most two points, and these two crossing points are used up by the vertices of the bigon, so no more crossings are possible. The two interior angles of a bigon must be equal, and can be any angle in the open interval $(0,2\pi)$. It will be convenient to consider an arc-polygon with two identical arcs to be a kind of degenerate bigon, with interior angle $0$.

\subsection{Lombardi drawings}
\begin{definition}
We define a \emph{Lombardi drawing} of a given graph to be a mapping from the vertices of the graph to points in the plane, and from the edges to circular arcs or straight line segments, with the following two properties:
\begin{itemize}
\item For each edge, the two endpoints of the edge in the graph are mapped to the two endpoints of its arc in the plane.
\item For each vertex $v$, the incident edges form arcs that are equally spaced around the point representing $v$, forming angles of $2\pi/\operatorname{deg}(v)$ between each consecutive pair of arcs.
\end{itemize}
A Lombardi drawing is \emph{planar} if the only points of intersection between pairs of its arcs are shared vertices.
\end{definition}

\subsection{M\"obius transformations}
By the \emph{extended plane} we mean the Euclidean plane augmented with a single point at infinity, denoted $\infty$.
An \emph{inversion} of the extended plane, with respect to a circle~$C$, maps each point $p$ to another point $p'$, so that $p$ and $p'$ both belong to a single ray from the center of $C$, with the product of their distances from the center equal to the squared radius of $C$. The center of $C$ is mapped to $\infty$, and vice versa. We consider a reflection of the plane to be a degenerate case of an inversion, and we consider the line of reflection to be a degenerate case of a circle with infinite radius. A \emph{M\"obius transformation} is any functional composition of inversions. These transformations map circles (or degenerate circles) to other circles (or degenerate circles), and preserve the angles between any two curves. For an introduction to these transformations, and the geometry of circles under these transformations, see e.g. \cite{Sch-79}.

Because M\"obius transformations map circular arcs (or straight line segments) to curves of the same type, and preserve angles, they map simple arc-polygons to other simple arc-polygons and Lombardi drawings to other Lombardi drawings.
It is possible for a M\"obius transformation to map the interior of a simple arc-polygon to its exterior and vice versa, but otherwise these transformations leave the angles of arc-polygons unchanged. As previous work on Lombardi drawing has already shown~\cite{Epp-DCG-14}, these properties make M\"obius transformations very convenient as a tool for bringing pieces of arc-polygons and of Lombardi drawings into a position where they can be glued together.

\section{Arc-triangles}

\begin{figure}[t]
\centering\includegraphics[scale=0.4]{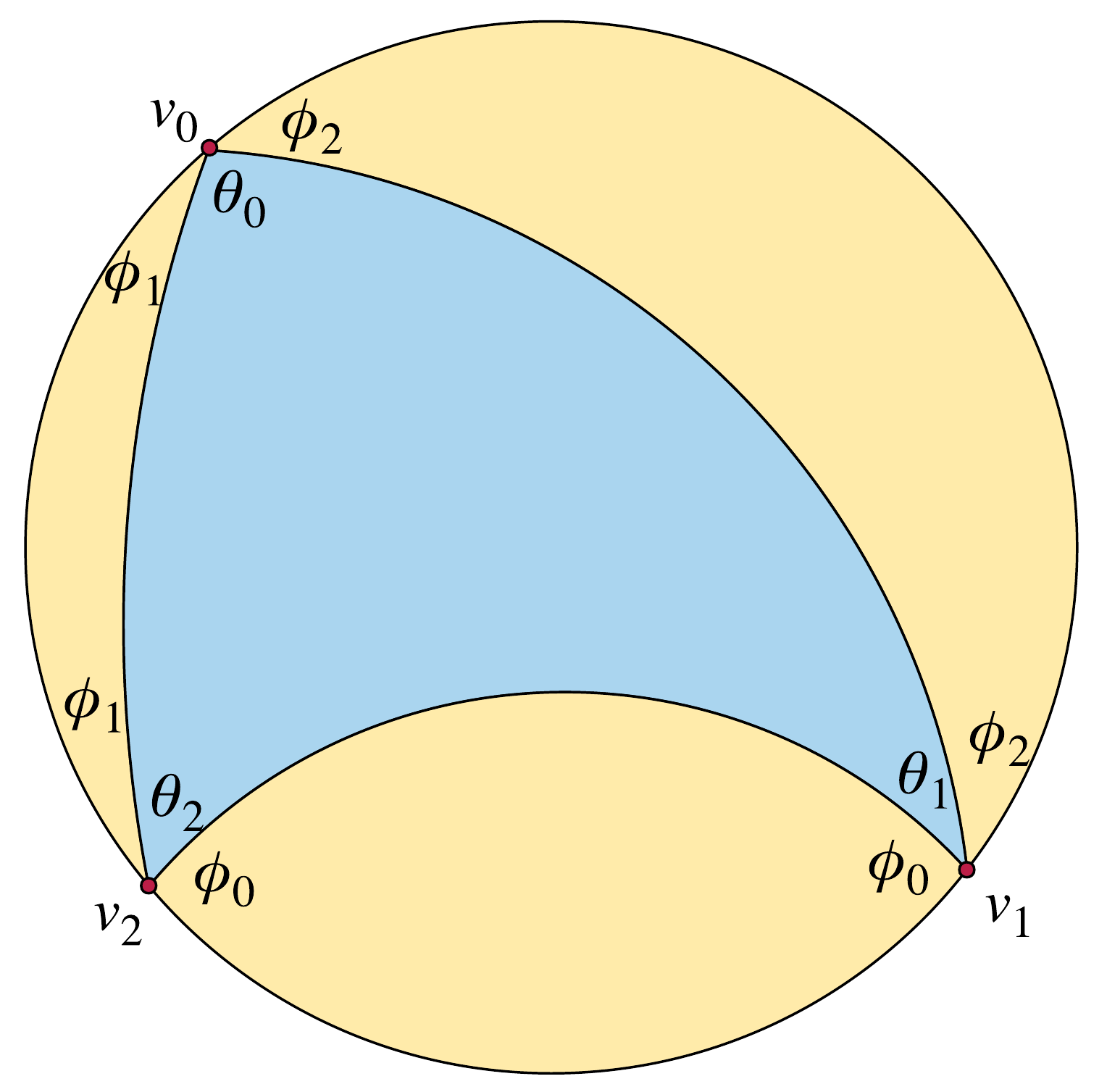}
\caption{A simple arc-triangle (blue), the circle through its three vertices (yellow), and the three bigons between the arc-triangle and the circle, labeled with vertices $v_i$, arc-triangle angles $\theta_i$, and bigon angles $\phi_i$.}
\label{fig:starfleet}
\end{figure}

Consider an arbitrary simple arc-triangle with vertices $v_i$ and interior angles $\theta_i$ (for $i\in\{0,1,2\}$), for instance the one in \cref{fig:boscovich} (for which the interior angles are $2\pi,\pi,\pi$) or the one in \cref{fig:starfleet}. Any three points are contained either in a unique circle or a straight line, but for the following definitions we need to know which side of the circle is its inside and which its outside, so we assume (by perturbing the triangle by a M\"obius transformation, if necessary) that the three vertices are not collinear, and are clockwise as $v_0$, $v_1$, and $v_2$ on the circle $C$ through them. We assume also that these three vertices have the same clockwise ordering on the arc-triangle, as shown in \cref{fig:starfleet}, meaning that when traveling along the arcs from $v_0$ to $v_1$, from $v_1$ to $v_2$, and from $v_2$ to $v_0$, the polygon is consistently on the right side of each arc; if necessary, this can be achieved by an inversion with respect to $C$.

Let $\theta_i$ be the interior angle of the arc-triangle at vertex $v_i$, as labeled on the figure. Each arc of the triangle is separated from $C$ by a bigon, and (in the case that the arc is contained in $C$) we let $\phi_i$ denote the angle of the bigon opposite vertex $v_i$. It is also possible for an arc to lie on $C$, defining a degenerate bigon with $\phi_i=0$. If an arc of the arc-triangle lies outside $C$, it still defines a bigon, but in this case we define $\phi_i$ to be a negative number, the negation of the interior angle of the bigon.

\begin{observation}
\label{obs:linear}
For angles $\theta_i$ and $\phi_i$ defined as above from a simple arc-triangle, and for $i\in\{0,1,2\}$,
\[ \theta_i + \phi_{(i-1)\bmod{3}} + \phi_{(i+1)\bmod{3}} = \pi. \]
\end{observation}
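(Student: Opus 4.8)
The plan is to localize the identity at the single vertex $v_i$ and to express both $\theta_i$ and the two relevant bigon angles in terms of tangent directions at $v_i$. Write $e^{+}$ for the edge of the arc-triangle joining $v_i$ to $v_{i+1}$ and $e^{-}$ for the edge joining $v_{i-1}$ to $v_i$. Since the vertices of an arc joining $v_i$ to $v_{i+1}$ are $v_i$ and $v_{i+1}$, this arc is the one ``opposite'' $v_{i-1}$, so the bigon it forms with $C$ carries the angle $\phi_{(i-1)\bmod 3}$; symmetrically $e^{-}$ is opposite $v_{i+1}$ and carries $\phi_{(i+1)\bmod 3}$. Let $\tau$ be the tangent line to $C$ at $v_i$, and let $\tau^{+}$ and $\tau^{-}$ be its two opposite directions, with $\tau^{+}$ the one pointing ``forward'' in the clockwise traversal of $C$ (toward $v_{i+1}$) and $\tau^{-}=-\tau^{+}$ the backward one (toward $v_{i-1}$). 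Because we have arranged the $v_j$ to have the same clockwise ordering on both $C$ and the arc-triangle, the interior of the arc-triangle lies locally to the right of each of its edges as it is traversed in this clockwise sense.

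First I would dispose of the baseline configuration in which both $e^{+}$ and $e^{-}$ lie on $C$: then $\phi_{(i-1)\bmod 3}=\phi_{(i+1)\bmod 3}=0$, the boundary of the arc-triangle coincides with $C$ in a neighborhood of $v_i$, so the interior is locally a half-plane bounded by $\tau$ and $\theta_i=\pi$, matching the claim. For the general case the key step is to show that replacing $C$ by the edge $e^{+}$ rotates the forward tangent direction at $v_i$ away from $\tau^{+}$ by exactly $\phi_{(i-1)\bmod 3}$, and replacing $C$ by $e^{-}$ rotates the backward tangent direction away from $\tau^{-}$ by exactly $\phi_{(i+1)\bmod 3}$ (rotations taken in a fixed orientation, and with the sign of $\phi$ negated precisely when the edge bulges to the outside of $C$). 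This is nothing more than the definition of the bigon angle as the angle between the two arc-tangents at a shared vertex of the bigon, together with the facts that a bigon's two angles are equal (so the angle read at $v_i$ is well-defined and has magnitude $|\phi|$) and that the side of $C$ to which the edge bulges determines the direction of the rotation. Combining the two rotations: the two edge-tangent directions of the arc-triangle at $v_i$ are $\tau^{+}$ and $\tau^{-}$ each turned by the corresponding $\phi$, and since $\tau^{+}$ and $\tau^{-}$ differ by $\pi$, the angle between the two edge tangents measured through the side containing the interior of the arc-triangle is $\pi-\phi_{(i-1)\bmod 3}-\phi_{(i+1)\bmod 3}$; by definition this angle is $\theta_i$.

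The main obstacle is purely the orientation-and-sign bookkeeping: one must verify, in each of the three configurations of an edge relative to $C$ (strictly inside, lying on $C$, strictly outside), that the rotation carrying the tangent of $C$ into the tangent of the edge has the asserted magnitude and the asserted direction, and that these conventions are mutually consistent at $v_i$ for $e^{+}$ and $e^{-}$ simultaneously and with the stipulated negation of $\phi$ for edges outside $C$. Once the conventions are pinned down using the common clockwise ordering of the $v_j$, there is no real computation left — everything reduces to the elementary observations about bigons and to measuring an angle at $v_i$ as $\pi$ minus two signed rotations off the straight line $\tau$.
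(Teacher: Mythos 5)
Your proposal is correct and takes essentially the same approach as the paper: both decompose the straight angle $\pi$ between the two tangent directions of $C$ at $v_i$ into the two incident bigon angles plus $\theta_i$, relying on the fact that a bigon's two angles are equal (so the angle read at $v_i$ is the defined $\phi$) and on the sign convention for edges outside $C$. The paper's proof is simply a terser statement of the same local angle bookkeeping at $v_i$.
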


\begin{proof}
The three angles on the left hand side are the angles at $v_i$ measured clockwise from the arc of $C$ clockwise of $v_i$ to the side of the arc-triangle clockwise of $v_i$, from this side of the arc-triangle to the other side, and from the other side of the arc-triangle to the arc of $C$ counter-counterclockwise of $v_i$. Therefore, their sum is the total angle at $v_i$ between the two arcs of $C$, which is just $\pi$.
\end{proof}

\begin{corollary}
\label{cor:phi}
Let $\psi=(\pi-\sum\theta_i)/2$. Then $ \phi_i = \psi+\theta_i$.
\end{corollary}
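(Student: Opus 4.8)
The plan is to treat \cref{obs:linear} as a system of three linear equations in the three unknowns $\phi_0,\phi_1,\phi_2$ (with the $\theta_i$ playing the role of parameters) and to solve it directly. Writing the equation of \cref{obs:linear} for each $i\in\{0,1,2\}$ and moving $\theta_i$ to the right-hand side gives
\[
\phi_{(i-1)\bmod 3}+\phi_{(i+1)\bmod 3}=\pi-\theta_i ,
\]
and the left-hand side here is exactly $\bigl(\sum_j\phi_j\bigr)-\phi_i$.

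First I would add the three instances of this equation together. Each $\phi_j$ occurs on the left-hand side of exactly two of the three equations (once as the ``$(i-1)\bmod 3$'' term, when $i=j+1$, and once as the ``$(i+1)\bmod 3$'' term, when $i=j-1$, and these two indices are distinct modulo $3$). Hence the sum of the left-hand sides is $2\sum_j\phi_j$, while the sum of the right-hand sides is $3\pi-\sum\theta_i$, so $\sum_j\phi_j=(3\pi-\sum\theta_i)/2$. Substituting this back into $\bigl(\sum_j\phi_j\bigr)-\phi_i=\pi-\theta_i$ yields
\[
\phi_i=\Bigl(\sum_j\phi_j\Bigr)-\pi+\theta_i=\frac{3\pi-\sum\theta_i}{2}-\pi+\theta_i=\frac{\pi-\sum\theta_i}{2}+\theta_i=\psi+\theta_i ,
\]
which is the claimed identity.

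There is essentially no obstacle here: the argument is a two-line manipulation of a linear system. It is worth noting that the derivation never uses the sign of $\phi_i$ or $\psi$, so it applies verbatim in the cases where an arc lies outside $C$ (so $\phi_i<0$) or on $C$ (so $\phi_i=0$), since \cref{obs:linear} was already stated with the signed convention for $\phi_i$. The only point requiring a little care is the index bookkeeping modulo $3$ in the summation step — specifically the claim that adding the three instances of \cref{obs:linear} counts each $\phi_j$ exactly twice; beyond that, the corollary is immediate, and it records as a byproduct that the bigon angles $\phi_i$ are completely determined by the triangle angles $\theta_i$.
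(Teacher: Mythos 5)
Your proof is correct and takes essentially the same approach as the paper, which simply states that ``simple algebra verifies'' the formula as the solution of the linear system from \cref{obs:linear}; you have just carried out that algebra explicitly (summing the three equations to find $\sum_j\phi_j$ and back-substituting). No issues.
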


\begin{proof}
Simple algebra verifies that this is the solution to the system of three linear equations in three unknowns given by \cref{obs:linear}.
\end{proof}

\begin{lemma}
\label{lem:1bad}
For three given angles $\theta_i$ with $0\le\theta_i\le 2\pi$, at most one of the angles $\phi_i$ calculated from the formula of \cref{cor:phi} can fail to satisfy $-\pi\le\phi_i\le\pi$.
If $\phi_i<-\pi$ then $\theta_i$ must be the only angle of the three given angles that is less than $\pi$; if $\phi_i>\pi$ then $\theta_i$ must be the only angle of the three given angles that is greater than~$\pi$.
\end{lemma}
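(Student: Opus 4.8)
The plan is to substitute the explicit formula of \cref{cor:phi} and reduce the statement to elementary inequalities among the $\theta_i$. For a permutation $\{i,j,k\}=\{0,1,2\}$, \cref{cor:phi} together with the definition of $\psi$ gives
\[
\phi_i \;=\; \psi+\theta_i \;=\; \tfrac12\bigl(\pi+\theta_i-\theta_j-\theta_k\bigr),
\]
so the bound $\phi_i>\pi$ is equivalent to $\theta_i>\pi+\theta_j+\theta_k$, and the bound $\phi_i<-\pi$ is equivalent to $\theta_j+\theta_k>3\pi+\theta_i$. The first step is simply to record these two equivalences.

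The second step is to read off, in each of these two cases, the position of every $\theta_\ell$ relative to $\pi$, using only $0\le\theta_\ell\le2\pi$. If $\phi_i>\pi$, then $\theta_i>\pi+\theta_j+\theta_k\ge\pi$, while $\theta_j<\theta_i-\theta_k-\pi\le 2\pi-\pi=\pi$ and symmetrically $\theta_k<\pi$; thus $\theta_i$ is the unique angle exceeding $\pi$, which is precisely the second assertion of the lemma in this case. Symmetrically, if $\phi_i<-\pi$, then $\theta_k>3\pi+\theta_i-\theta_j\ge 3\pi-2\pi=\pi$ and likewise $\theta_j>\pi$, while $\theta_i<\theta_j+\theta_k-3\pi\le 4\pi-3\pi=\pi$; thus $\theta_i$ is the unique angle smaller than $\pi$, giving the second assertion in this case.

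The third step is to combine these observations to rule out two simultaneous violations. If $\phi_i$ and $\phi_j$ with $i\ne j$ were both out of range, then each of the possible combinations of violation directions produces a contradiction about whether some single $\theta_\ell$ lies below or above $\pi$: two violations of the upper bound force $\theta_j<\pi$ (from the analysis at $i$) and $\theta_j>\pi$ (from the analysis at $j$); two violations of the lower bound likewise force $\theta_j>\pi$ and $\theta_j<\pi$; and a violation of the upper bound at $i$ together with a violation of the lower bound at $j$ forces $\theta_k<\pi$ and $\theta_k>\pi$ (and the symmetric mixed case is identical). Hence at most one $\phi_i$ fails the bound, and the consequences about which $\theta_i$ is the exceptional angle have already been established in the second step.

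I expect no genuine obstacle here: the only care needed is in the bookkeeping of the third step, making sure that every pair of simultaneous violations is contradictory. This works out cleanly precisely because a single violation at index $i$ pins down the sign of $\theta_\ell-\pi$ for all three indices $\ell$, so any second violation must disagree with it on one of them.
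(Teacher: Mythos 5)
Your proof is correct: the algebra in each step checks out, and the case analysis in the third step is exhaustive (two upper violations, two lower violations, and the mixed case each yield a contradiction on the position of some single $\theta_\ell$ relative to $\pi$). Your route differs from the paper's in an interesting way. The paper sorts the four quantities $\theta_0,\theta_1,\theta_2,\pi$ and observes that each $\phi_i$ is half of a $\pm$-signed sum of them; it then argues that only the sign pattern assigning $+$ to the two largest and $-$ to the two smallest can produce a value outside $[-\pi,\pi]$, because the other two patterns are sums of differences spanning disjoint ranges and hence bounded by $2\pi$ in absolute value. From this it deduces that $\pi$ must lie strictly between the extremes, which identifies the exceptional $\theta_i$. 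You instead prove the second assertion of the lemma first --- a single violation at index $i$ pins down the sign of $\theta_\ell-\pi$ for all three $\ell$ --- and then derive the ``at most one'' claim as a corollary, since any two violations would disagree about one of these signs. Your logical order is the reverse of the paper's, and arguably more transparent: the paper's sign-pattern argument is slicker and more uniform, but yours makes the quantitative content (e.g.\ that $\theta_j<\theta_i-\theta_k-\pi\le\pi$) completely explicit and requires no appeal to a sorted ordering. Both are elementary and both are complete.
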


\begin{proof}
Consider the three angles $\theta_i$ and the angle $\pi$, listed in sorted order. Each $\phi_i$ is obtained by adding two of these angles, subtracting the other two, and dividing the total by two; in terms of the sorted ordering of the angles, we can represent this
(up to the sign of $\phi_i$) as one of the three sign patterns $++--$, $+-+-$, or $+--+$.
However, only the first of these can produce a value of $\phi_i$ that is out of range.
The other two sign patterns describe the difference of the smallest two angles, plus or minus the difference of the largest two angles. Because these two differences span different ranges of angles, they can sum to at most $2\pi$ (and must sum to at least $-2\pi$). Therefore, for these other two sign patterns, division by two produces a value in the range between $-\pi$ and~$\pi$.

For the sign pattern $++--$, $\pi$ cannot be the maximum or minimum of the three angles, because if it were then each difference of angles would be at most $\pi$, and so would the average of two differences. So when this sign pattern leads to a value of $\phi_i$ that is out of range, there is a unique angle $\theta_i$ with the same sign in the sign pattern, implying the second part of the lemma. 
\end{proof}

\begin{figure}[t]
\includegraphics[width=0.5\columnwidth]{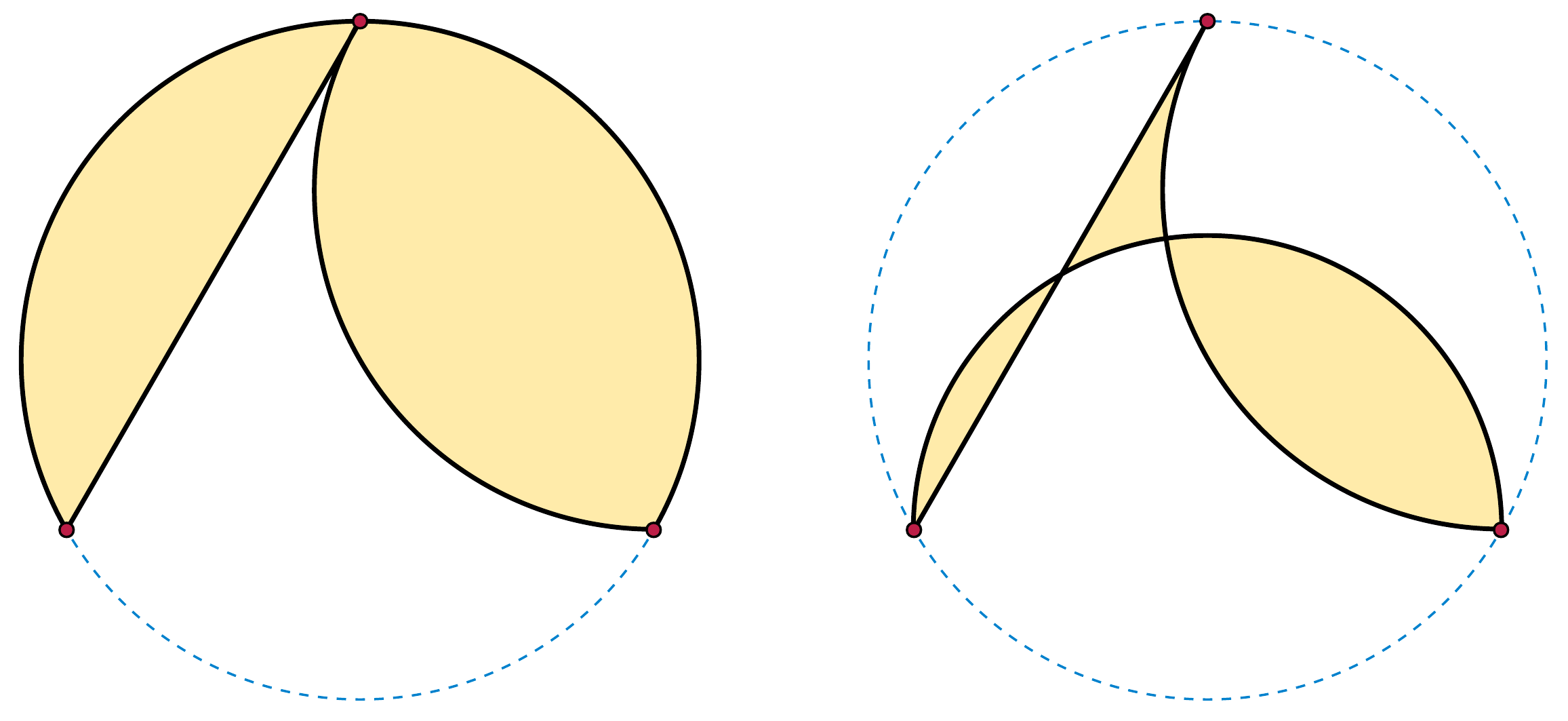}
\caption{For some triples of angles $\theta_i$, using \cref{cor:phi} to compute angles $\phi_i$ and then drawing arcs with these angles may not produce a simple arc-triangle. Left: $\theta_i=0,4\pi/3,5\pi/3$ (at the top, right, and left vertices, respectively) and $\phi_i=-\pi,\pi/3,2\pi/3$. The arc with angle $\phi_i=-\pi$ overlays the opposite vertex. Right: $\theta_i=0,3\pi/2,11\pi/6$, $\phi_i=-7\pi/6,\pi/3,2\pi/3$. The arc with angle $\phi_i=-7\pi/6$ crosses the other two arcs.}
\label{fig:crossed}
\end{figure}

The definitions above of $\theta_i$ and $\phi_i$ are only for simple arc-triangles, but one can also plug in other choices of $\theta_i$, compute $\phi_i$ from them, and examine the arc-triangles that result, which may not be simple. Examples of what can go wrong are depicted in \cref{fig:crossed}: the left arc-triangle of the figure has a vertex on the opposite side arc, and the right arc-triangle has two crossing pairs of arcs.

\begin{theorem}
\label{thm:triangle}
Three given angles $\theta_i$ with $0\le\theta_i\le 2\pi$ can be realized as the interior angles of a simple arc-triangle if and only if the angles $\phi_i$, as calculated by \cref{cor:phi}, satisfy the inequalities $-\pi<\phi_i<\pi$.
\end{theorem}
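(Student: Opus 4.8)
The plan is to prove both implications, in each case using a Möbius transformation to reduce to a transparent normalized configuration. Since Möbius transformations preserve circular arcs, angles, and simplicity, and since \cref{obs:linear} and \cref{cor:phi} already express each $\phi_i$ as the linear function $\psi+\theta_i$ of the given angles, all that remains in either direction is to pin down the geometry of the arcs.

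\emph{Necessity.} Suppose a simple arc-triangle realizes $\theta_0,\theta_1,\theta_2$; arrange it as in the paragraphs before the theorem, so that the signed bigon angles $\phi_i$ are defined and equal $\psi+\theta_i$ by \cref{cor:phi}. By \cref{lem:1bad} at most one $\phi_i$ can lie outside $[-\pi,\pi]$, so it suffices to show that a \emph{simple} arc-triangle cannot have $|\phi_i|\ge\pi$ for any $i$. I would apply a further Möbius transformation sending the opposite vertex $v_i$ to $\infty$, so that $C$ becomes a line, the two sides incident to $v_i$ become straight rays, and the side $s_i$ opposite $v_i$ becomes a bounded circular arc whose bigon with that line has angle $\ge\pi$. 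Such a large bigon angle forces $s_i$ to leave both of its endpoints on the ``wrong'' side of the line and wrap around; one then reads off, by elementary plane geometry, that $s_i$ must either pass through $v_i$ or cross one of the two rays — precisely the two failure modes pictured in \cref{fig:crossed} — contradicting simplicity.

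\emph{Sufficiency.} Conversely, assume $\phi_i:=\psi+\theta_i$ satisfies $-\pi<\phi_i<\pi$ for all $i$. Fix a circle $C$ with three points $v_0,v_1,v_2$ clockwise on it, and for each $i$ draw the unique circular arc $s_i$ from $v_{(i-1)\bmod 3}$ to $v_{(i+1)\bmod 3}$ meeting $C$ at signed angle $\phi_i$ (inside the disk when $\phi_i>0$, outside when $\phi_i<0$, on $C$ when $\phi_i=0$), taking the arc of its circle whose bigon angle with $C$ equals $|\phi_i|$; this arc is well defined exactly because $|\phi_i|<\pi$. Two things then need checking. \emph{(i)} The interior angle at each $v_i$ is $\theta_i$: at $v_i$ the two sub-arcs of $C$ span a straight angle, the sides $s_{i-1}$ and $s_{i+1}$ leave $v_i$ at signed angles $\phi_{i-1}$ and $\phi_{i+1}$ from those sub-arcs, so the interior angle between them is $\pi-\phi_{i-1}-\phi_{i+1}=\theta_i$ — this is \cref{obs:linear} run in reverse, and the sign conventions make it hold even when some $\phi_j<0$. \emph{(ii)} The arc-triangle is simple. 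Here I would first observe that the particular placement of $v_0,v_1,v_2$ on $C$ is immaterial: any two clockwise triples on $C$ are interchanged by a Möbius transformation fixing $C$ and its interior, which carries the whole construction (signed angles included) to the construction for the other placement while preserving simplicity. So we may take $v_1=\infty$, whereupon $C$ is a line, $s_0$ and $s_2$ are rays from $v_2$ and $v_0$, and $s_1$ is a bounded arc from $v_0$ to $v_2$; since each $|\phi_i|<\pi$, the rays $s_0,s_2$ stay on one side of the line near their finite endpoints and $s_1$ stays inside a bounded lune hugging the segment $[v_0,v_2]$, and one checks directly that these three curves meet only at their shared endpoints and so bound a Jordan region.

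In both directions the crux — and the main obstacle — is the same: reconstructing the global position of a circular arc from just its two endpoints and its signed angle with $C$, and recognizing that $|\phi_i|<\pi$ is exactly the threshold beyond which the arc is long enough to collide with the opposite vertex or with another side. The reduction sending a suitable vertex to $\infty$ is what makes this manageable, because there the two offending configurations of \cref{fig:crossed} become visibly impossible (for sufficiency) or forced (for necessity) in ordinary plane geometry.
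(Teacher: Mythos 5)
Your overall architecture matches the paper's: reduce via \cref{cor:phi} and a M\"obius normalization to the configuration of three arcs meeting the circumcircle $C$ at signed angles $\phi_i$, then show that forbidden intersections occur exactly when some $|\phi_i|\ge\pi$. The boundary case $\phi_i=\pm\pi$ (the side coincides with the arc of $C$ through $v_i$) and the angle verification in your step~(i) are fine. But both substantive geometric claims are asserted rather than proved, and they are precisely where the work lies. For necessity, ``one then reads off, by elementary plane geometry, that $s_i$ must either pass through $v_i$ or cross one of the two rays'' is the entire content of that direction, and nothing in your sketch establishes it. The paper proves it by a continuity argument: continuously move the out-of-range $\phi_i$ back toward $\pm\pi$ (using the second part of \cref{lem:1bad} to keep the other two given angles nonnegative), observe that the only combinatorial change along the deformation is the removal of exactly two crossings at the instant $\phi_i=\pm\pi$ when $s_i$ sweeps across $v_i$, and conclude that those two crossings were present in the original configuration. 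You would need either this or an actually worked-out case analysis of where the ``long'' arc ($|\phi_i|>\pi$) lies relative to the two rays; the latter is not trivial, since the position of that arc depends on the positions of $v_{i-1}$ and $v_{i+1}$ on $C$ as well as on $\phi_i$.

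For sufficiency the gap is more serious. After sending $v_1$ to $\infty$, the facts you cite --- each ray lies on one side of the line near its finite endpoint, and $s_1$ lies in a bounded lune over the segment $[v_0,v_2]$ --- do not preclude any of the three possible bad intersections. The two rays $s_0$ and $s_2$ can perfectly well cross each other at a finite point while each stays on a single side of $L$; and a ray emanating from an endpoint of the bounded arc $s_1$ can cut $s_1$ a second time (a line and a circle meet at most twice, and only one meeting is accounted for by the shared vertex), lune or no lune. The paper closes exactly this hole with an ordering argument applied at each shared vertex: when the two arcs meeting at $v_i$ have $\phi$'s of the same sign, the order in which they leave the circumcircle at $v_i$ (determined by their angles there) agrees with the order in which they return to it (determined by the positions of the other two vertices), so a second intersection, which would have to swap that order, is impossible; when the signs differ, the two arcs lie on opposite sides of $C$ and cannot meet again. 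Some version of this consistency-of-ordering argument, or an equivalent quantitative check, is required; ``one checks directly'' does not supply it.
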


\begin{proof}
By \cref{cor:phi}, a realization must have the form shown in \cref{fig:starfleet}: three circular arcs making angles of $\phi_i$ to the circumcircle of the three vertices. The location of these vertices on the circumcircle, and the location of the circumcircle, are not important, as any three points may be transformed into any other three points by a M\"obius transformation. The important question for the realizability of these angles by an arc-triangle is whether these three arcs form a simple arc-triangle $\Delta$, or whether they have undesired crossings.

First, let us disprove the existence of a realization when the angles $\phi_i$ do not satisfy the inequalities. 

Consider the case that some $\phi_i=\pm\pi$. In this case, the arc with angle $\phi_i$, opposite vertex $v_i$, coincides with an arc of the circumcircle passing through vertex $v_i$. Since a simple arc-triangle is not allowed to have one of its sides passing through the opposite vertex, the angles in this case cannot come from a simple arc-triangle.

Next, instead suppose that there is an angle $\phi_i$ that is out of range: $\phi_i<-\pi$ or $\phi_i>\pi$.
By \cref{lem:1bad}, it must be the only angle with this property. In the case that $\phi_i<-\pi$, consider continuously increasing $\phi_i$ (and decreasing the two given angles that are not $\theta_i$ to match) until it is greater than $-\pi$.
By the second part of \cref{lem:1bad}, this decrease in the given angles cannot cause them to become negative. At the end of the sequence, the angles meet the conditions of the previous case of the theorem, producing a simple arc-triangle, but at the point in the sequence where $\phi_i$ becomes equal to $-\pi$, and the arc with angle $\phi_i$ crosses over vertex $v_i$, two crossings with the other two arcs are removed.  There is no combinatorial change to the configuration of arcs and their crossings anywhere else in the sequence, so these two crossings must have been present in the configuration with the given angle~$\phi_i$. In the case when $\phi_i>\pi$, instead continuously decrease $\phi_i$ until it is less than $\pi$; the remaining argument is the same.

Finally, let us prove that angles $\phi_i$ satisfying the inequalities must be simple. Suppose all $\phi_i$ obey $-\pi<\phi_i<\pi$. Then it is impossible for the two arcs at $v_i$ to cross.
If the values of $\phi$ corresponding to these two arcs have opposite signs, then one is inside the circumcircle and the other outside, and if one has sign zero, then it lies on an arc of the circumcircle inaccessible to the other. If both arcs have values of $\phi$ with the same sign, then they leave the circumcircle at $v_i$ in a relative ordering, determined by their angle at $v_i$, that is consistent with the relative ordering at which they reach the circumcircle again at the other two vertices, determined by the positions of those two vertices. Since circular arcs can meet only twice, and these two arcs meet once at $v_i$, any additional crossing would swap their relative positions, so they cannot cross.
\end{proof}

\begin{figure}[t]
\centering\includegraphics[width=0.3\columnwidth]{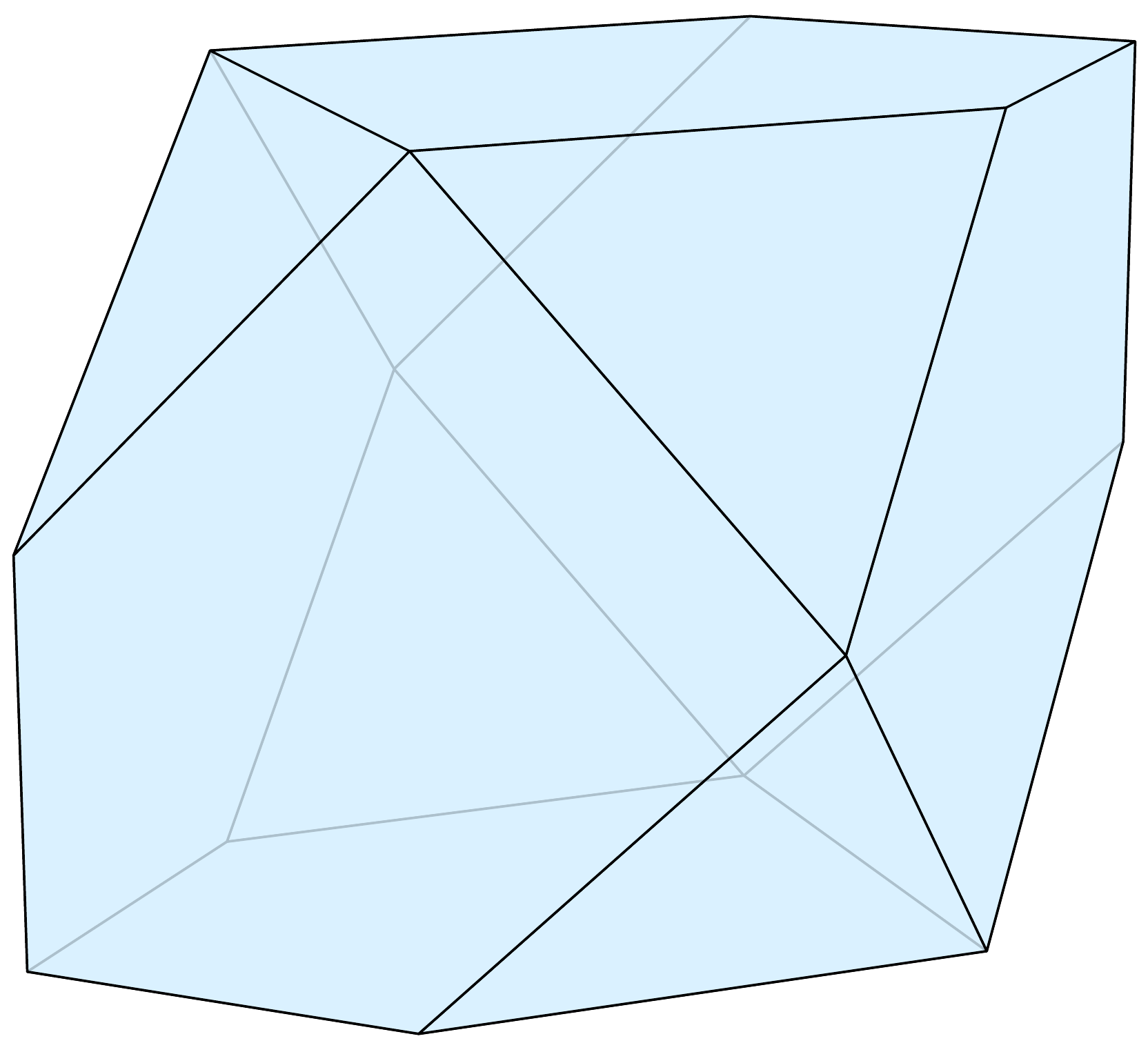}
\caption{Polyhedral visualization of the subset of triples of angles that can be realized by simple arc-triangles.}
\label{fig:feasible}
\end{figure}

The linear inequalities $0\le\theta_i\le 2\pi$ can be thought of as defining a cube in a three-dimensional space having the angles $\theta_i$ as Cartesian coordinates. For this interpretation of these numbers, the additional linear inequalities $-\pi<\phi_i<\pi$ cut off six of the eight corners of the cube (the corners where not all coordinate values are equal), replacing them by equilateral triangle faces whose vertices are midpoints of the cube edges. The result of this truncation is a feasible region of triples of angles that can be realized by simple arc-triangles, bounded by six pentagonal faces (the truncated faces of the cube) and six triangular faces (the inequalities on $\phi_i$). It is depicted in \cref{fig:feasible}. The result in \cref{lem:1bad} that only one of the inequalities on $\phi_i$ can be exceeded corresponds geometrically to the fact that the triangular faces meet only in vertices, not in edges. At the vertices where two triangular faces meet, two of these inequalities on $\phi_i$ are exactly met but neither is exceeded.

We remark that it is straightforward to implement this construction of a simple arc-triangle for given angles and given triangle vertices, by computing the circumcircle of the vertices and the angle made by the triangle arcs to this circumcircle.

\section{Arc-polygons}

We do not have a complete characterization of the angle sequences of higher-order simple arc-polygons, as we do for arc-triangles. Instead, we prove a sufficient condition, which will be enough for our application to Lombardi drawing: with a single exception, an angle sequence can be realized by a simple arc-polygon whenever all angles are $\le\pi$. For triangles, this follows from our previous characterization:

\begin{lemma}
\label{lem:polygon-base}
If a triple of given angles $\theta_i$ all lie in the range $0\le\theta_i\le\pi$ and is not a permutation of $(0,0,\pi)$ then there exists a simple arc-triangle having these interior angles.
\end{lemma}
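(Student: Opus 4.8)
The plan is to reduce the statement immediately to \cref{thm:triangle}. Since the given angles already satisfy $0\le\theta_i\le 2\pi$, that theorem tells us a simple arc-triangle with these interior angles exists precisely when the companion angles $\phi_i$ produced by the formula of \cref{cor:phi} satisfy the strict inequalities $-\pi<\phi_i<\pi$. So the whole proof amounts to verifying these inequalities under the hypothesis $0\le\theta_i\le\pi$, excluding the permutations of $(0,0,\pi)$.

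The first step is to write $\phi_i$ out explicitly. Substituting $\psi=(\pi-\sum_j\theta_j)/2$ into $\phi_i=\psi+\theta_i$ gives, for $\{i,j,k\}=\{0,1,2\}$,
\[ \phi_i=\frac{\pi+\theta_i-\theta_j-\theta_k}{2}. \]
Now I would bound this using only the range constraints on the $\theta$'s. For the upper bound, $\theta_i\le\pi$ together with $\theta_j,\theta_k\ge 0$ makes the numerator at most $2\pi$, so $\phi_i\le\pi$; equality forces $\theta_i=\pi$ and $\theta_j=\theta_k=0$, i.e.\ the triple is a permutation of $(0,0,\pi)$, which is excluded by hypothesis. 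Hence $\phi_i<\pi$. For the lower bound, $\theta_i\ge 0$ and $\theta_j,\theta_k\le\pi$ make the numerator at least $-\pi$, so in fact $\phi_i\ge-\pi/2>-\pi$, with room to spare. Thus $-\pi<\phi_i<\pi$ for every $i$, and \cref{thm:triangle} produces the required simple arc-triangle.

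I do not expect a genuine obstacle in this argument: the content is a one-line substitution followed by two elementary bounds. The only point that merits care is the equality analysis for the upper bound — confirming that the boundary value $\phi_i=\pi$ is attained only at permutations of $(0,0,\pi)$ and that the other boundary value $-\pi$ is never even approached — since this is exactly where the single excluded triple enters. This lemma is also the base case for the induction on the number of vertices used in the proof of \cref{thm:up-to-pi}, so it is worth stating the bounds in the clean form above so they can be reused there.
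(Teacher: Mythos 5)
Your proof is correct and follows the paper's argument exactly: both compute the explicit range of the $\phi_i$ from the formula of \cref{cor:phi}, observe that the minimum $-\pi/2$ is safely above $-\pi$ and that the maximum $\pi$ is attained only by permutations of $(0,0,\pi)$, and then invoke \cref{thm:triangle}. The only difference is that you write out the substitution $\phi_i=(\pi+\theta_i-\theta_j-\theta_k)/2$ explicitly, which the paper leaves implicit.
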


\begin{proof}
For these angles, the values of $\phi_i$ calculated according to \cref{cor:phi} range from a minimum of $-\pi/2$ (when $\theta_i=0$ and both other given angles are $\pi$) to a maximum of $\pi$ (when $\theta_i=\pi$ and both other given angles are $0$). Because the angles $(0,0,\pi)$ are the only combination achieving this maximum, all other triples of angles result in $-\pi/2\le\phi_i<\pi$. The result follows from \cref{thm:triangle}.
\end{proof}

We also need the following two special cases:

\begin{figure}[t]
\centering\includegraphics[width=0.5\columnwidth]{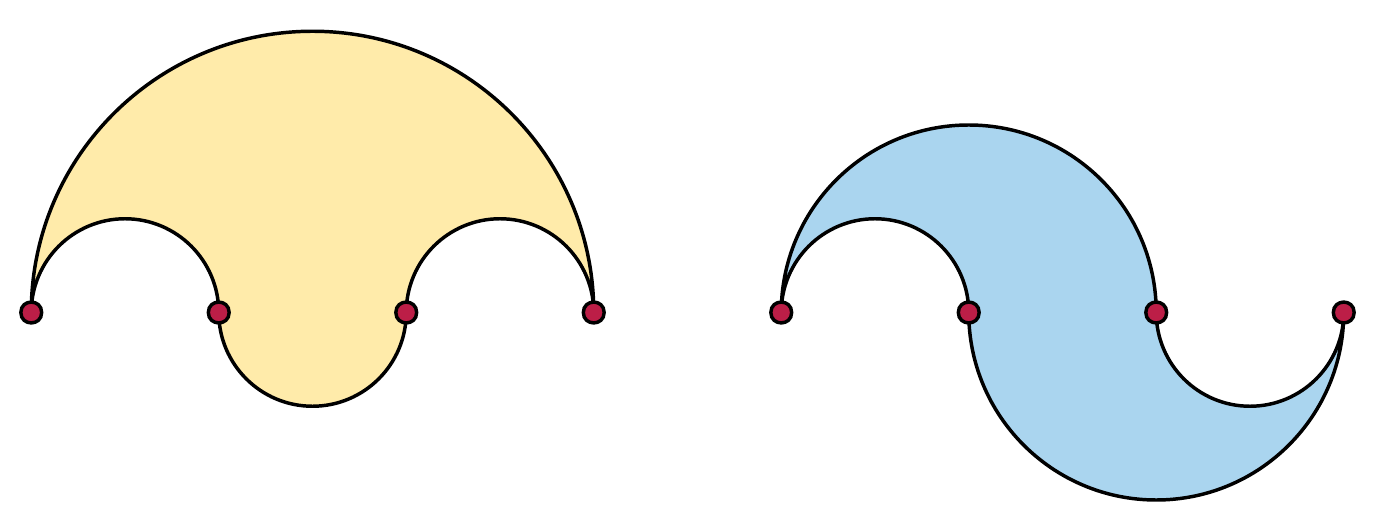}
\caption{Arc-quadrilaterals with interior angles $(0,0,\pi,\pi)$ (left, yellow) and $(0,\pi,0,\pi)$ (right, blue).}
\label{fig:kilroy}
\end{figure}

\begin{observation}
\label{obs:kilroy}
The sequences $(0,0,\pi,\pi)$ and $(0,\pi,0,\pi)$ can be realized as the interior angles of arc-quadrilaterals.
\end{observation}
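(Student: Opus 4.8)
The plan is to prove this by exhibiting an explicit simple arc-quadrilateral for each of the two sequences, as drawn in \cref{fig:kilroy}. In both cases it is convenient to place all four vertices on a common line and to make each of the four edges a semicircle whose diameter is the segment joining the two vertices that edge connects. Then every edge automatically has its endpoints at the right vertices, and at each vertex the two incident semicircles share a vertical tangent line, so \emph{every} interior angle is automatically $0$ or $\pi$. Which of the two a vertex gets --- and, crucially, whether a $0$ comes from a genuine cusp rather than from the reflex angle $2\pi$ --- is then governed by just two discrete choices: which half-plane each semicircle lies in, and the left-to-right order of the four vertices on the line.

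Concretely, for $(0,\pi,0,\pi)$ I would take the vertices in the order $v_2,v_3,v_1,v_0$ along the line, with $e_0,e_3$ drawn as upper semicircles and $e_1,e_2$ as lower ones; for $(0,0,\pi,\pi)$ I would take the order $v_0,v_3,v_2,v_1$ with $e_0,e_1,e_3$ upper and $e_2$ lower. The argument then has two steps. \emph{Simplicity}: one checks there are no spurious arc intersections, which reduces to elementary facts about semicircles with collinear diameters --- two semicircles on the same side meet at most at a shared diameter endpoint as long as their diameters are nested or disjoint, and two on opposite sides meet only on the line; the chosen vertex orders make all the relevant diameters nested or disjoint. \emph{Angles}: at a vertex whose two incident semicircles lie on the same side the boundary has a cusp, while at a vertex whose semicircles lie on opposite sides the boundary passes through smoothly, contributing $\pi$; for each of the two chosen orders the ``same-side'' vertices are exactly the two intended to be $0$ and the ``opposite-side'' vertices the two intended to be $\pi$.

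The algebra --- circle centres and radii, and the parity bookkeeping in a point-in-region test --- is routine and I would omit it. The one point that genuinely requires care, and the main obstacle, is verifying at each cusp that the interior of the arc-quadrilateral is the \emph{thin} wedge between the two tangent arcs (interior angle $0$) rather than its complement (interior angle $2\pi$); a carelessly chosen all-semicircle picture can instead realize $(0,0,0,0)$, or leave a $2\pi$ at a cusp. I would pin this down by arranging each intended cusp vertex to be \emph{extremal} --- leftmost or rightmost among the four vertices --- so that the half-plane on its far side belongs to the unbounded exterior component; then a single sample point placed inside the thin wedge near the cusp is shown to lie in the interior by a short ray argument, which forces the interior angle there to be $0$. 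This extremality also makes each construction invariant under a symmetry of the plane (a reflection for $(0,0,\pi,\pi)$, a half-turn for $(0,\pi,0,\pi)$) interchanging the two cusps and interchanging the two smooth vertices, which halves the work in the angle step.
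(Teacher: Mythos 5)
Your proposal is correct and takes essentially the same approach as the paper: the paper's entire proof is to exhibit \cref{fig:kilroy}, four collinear vertices joined by semicircular arcs on chosen sides of the line, which is exactly your construction (and your vertex orderings and side assignments do yield the claimed angle sequences). The additional care you devote to simplicity and to certifying that each cusp angle is $0$ rather than $2\pi$ is detail the paper leaves to the figure, not a different argument.
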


\begin{proof}
See \cref{fig:kilroy}, which realizes these sequences using four vertices equally spaced along a line, with semicircular arcs.
\end{proof}

For higher-order polygons, we will prove the existence of a simple realization by induction on the order of the polygon, gluing together arc-polygons with fewer vertices at \emph{cusps}, vertices of interior angle zero.

\begin{definition}
If $\sigma$ and $\tau$ denote sequences of angles, we let $\sigma\tau$ denote their concatenation, and we let $\sigma0$ and $\tau0$ denote the sequences of angles obtained from $\sigma$ and $\tau$ respectively by including one more angle equal to zero.
\end{definition}

With this notation, the following lemma describes the process of gluing together two arc-polygons.

\begin{lemma}
\label{lem:polygon-glue}
Let $\sigma$ and $\tau$ both be sequences of angles such that the augmented sequences $\sigma0$ and $\tau0$ are realizable as the sequences of interior angles of simple arc-polygons. Then the concatenation $\sigma\tau$ is also realizable by a simple arc-polygon.
\end{lemma}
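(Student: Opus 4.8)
The plan is to realize $\sigma\tau$ by gluing a realization of $\sigma0$ to a realization of $\tau0$ along a cusp. Let $P$ be a simple arc-polygon realizing $\sigma0$, with the zero angle at a vertex $c_P$ whose two incident edges are $e$ (joining $c_P$ to some $\sigma$-vertex $u$) and $e'$; similarly let $Q$ realize $\tau0$ with cusp $c_Q$ and incident edges $f$ (joining $c_Q$ to some $\tau$-vertex $w$) and $f'$. The combinatorial heart of the argument is this: if we position $P$ and $Q$ so that the arcs $e$ and $f$ coincide, with the two polygons on opposite sides of this common arc and with its endpoints matched so that $c_P$ is identified with $w$ and $c_Q$ with $u$, then in $P\cup Q$ the common arc becomes an interior arc, and the boundary runs through every vertex of $P$ except $c_P$ and every vertex of $Q$ except $c_Q$. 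At the point where $c_Q$ meets $u$ the interior angle is the sum of the angle of $P$ at $u$ and the zero angle of $Q$ at $c_Q$, hence just the angle of $P$ at $u$, and symmetrically at the other identified pair, so the two cusps are harmlessly absorbed. Reading off the resulting cyclic sequence yields $\sigma$ followed by a reversal of $\tau$; since a cyclic sequence is realizable if and only if its reversal is (apply a reflection, a degenerate M\"obius transformation, which preserves all angles), it suffices to run the construction starting from a realization of the reversal of $\tau0$, which then produces exactly $\sigma\tau$.

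To turn this into a genuine planar construction, use M\"obius transformations to superimpose the two cusp-edges. Sending $c_P$ to $\infty$ turns the tangent pair $e,e'$ into parallel rays on two parallel lines; near $\infty$ the interior of $P$ is the strip between those lines, $e$ becomes a ray running from the image of $u$ out to $\infty$, and — the point that must be checked — $P$ lies in one of the two closed half-planes bounded by the line $\ell$ through $e$ (equivalently, $P$ lies on one side of the circle carrying the arc $e$). For $Q$: send $c_Q$ to $\infty$ to put $Q$ likewise in a half-plane with $f$ a ray on the boundary, then compose with an inversion centred at the image of $w$; this sends $w$ to $\infty$ so that in the matching below $c_P$ (which sits at $\infty$ for $P$) gets identified with $w$, and because the inversion is centred on the boundary line it keeps $Q$ inside a half-plane with $f$ on its boundary. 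A final similarity — composed if necessary with the reflection across $\ell$, which fixes $e$ pointwise — makes $f$ coincide with $e$ as a ray, matches the endpoints as prescribed, and places $Q$ in the half-plane complementary to the one holding $P$. The desired arc-polygon is then $R := P \cup Q$, whose interior is $\mathrm{int}(P)\cup\mathrm{int}(Q)$ together with the relative interior of the shared arc, hence connected.

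The main obstacle is proving that $R$ is simple, i.e.\ that $P$ and $Q$ meet only along the shared arc. Granting the half-plane claim, $P$ and $Q$ lie in complementary closed half-planes whose intersection is the single line $\ell$, so it remains only to verify that each of them touches $\ell$ solely along its distinguished cusp-edge; this can be arranged using the freedom left in the M\"obius transformations, or at worst by a slight angle-preserving perturbation. I expect the half-plane claim itself to be the delicate ingredient. It should follow from a Jordan-curve crossing-count: the remainder of $\partial P$ meets the circle carrying $e$ in only finitely many transversal crossings, the tangency at the cusp forces this count to be even, and a short local analysis shows it is in fact zero — the same flavour of arc-crossing reasoning used in the proof of \cref{thm:triangle}. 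It may also be necessary to choose, for each of $P$ and $Q$, which of its two cusp-edges to glue along in order to make the claim hold.
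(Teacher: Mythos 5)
There is a genuine gap, and it is exactly where you suspected: the half-plane claim is false in general, and your construction has no room to recover from its failure. After inverting at the cusp, the simplicity of $P$ only guarantees that the other edges of $P$ avoid the \emph{ray} carrying $e$; they are perfectly free to cross the rest of the line $\ell$ (equivalently, before inversion, to cross the circle $C_e$ supporting $e$ at points of $C_e\setminus e$). For a concrete failure, take the vertices $0,1,2,3$ on the $x$-axis with edges: the upper semicircles on $[0,1]$ and $[1,2]$, a large arc from $(2,0)$ to $(3,0)$ that dips far below the $x$-axis (say through $(2.5,-10)$), and the upper semicircle on $[0,3]$ closing up at the cusp $(0,0)$. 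This is a simple arc-polygon with a cusp, but the large arc starts inside the circle through $[0,3]$ and ends up far outside it, so it must cross that circle's lower half: the polygon does not lie on one side of the circle carrying that cusp-edge. Your parity heuristic only ever gives evenness (and not even that, since the two ends of the remaining boundary arc may approach $C_e$ from opposite sides), never zero. The fallback of choosing the other cusp-edge can also be defeated by polygons with enough edges that wander across both supporting circles, and the remaining fallback (``a slight angle-preserving perturbation'') is itself an unproved and delicate claim, since you must preserve every interior angle, preserve simplicity, and keep one edge pinned to a prescribed circle. Note also that the lemma places no upper bound on the angles in $\sigma$ and $\tau$, so you cannot assume the realizations are geometrically tame.

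The underlying problem is that identifying $e$ with $f$ pins down the relative position of $P$ and $Q$ completely (up to one reflection), so there is no remaining freedom with which to separate their finite parts. The paper's proof avoids this rigidity: it sends \emph{both} cusps to infinity, so that each polygon becomes a bounded piece with two parallel rays, normalizes the two strips to the same width, points the rays in opposite directions along the same two lines, and then \emph{translates the two pieces arbitrarily far apart} before joining corresponding rays by collinear segments. Every potential crossing is then excluded either by the simplicity of each piece (nothing in a piece crosses its own rays, and the joining segments lie inside those rays) or by the large separation (the two finite parts are disjoint). If you want to salvage a ``shared edge'' style of gluing you would need to prove a normalization lemma giving you back that kind of freedom; as written, the simplicity of $P\cup Q$ is not established.
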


\begin{proof}
Let $S$ and $T$ be simple arc-polygons realizing $\sigma_0$ and $\tau_0$ respectively. Apply inversions separately to~$S$ and $T$, centered respectively on the cusps of $S$ and $T$, producing arc-polygons $S'$ and $T'$. Both $S'$ and $T'$ have their cusp transformed to lie at $\infty$, and the circular arcs incident to these cusps transformed into infinite rays. All other vertices and edges of $S'$ and $T'$ remain finite. The condition that the angle at the cusp be zero in $S$ and $T$ corresponds, in $S'$ and $T'$, to the condition that these two rays be parallel. By adjusting the radii of the inversions we can additionally ensure that in both $S'$ and $T'$ the rays lie on lines at distance one from each other. By rotating $S'$ and $T'$ so that their pairs of rays both lie on the same two lines, in opposite directions, and translating them far enough apart along these two parallel lines, we may glue them together into a single simple arc-polygon, as shown in \cref{fig:glue-infinite-cusps}, giving a realization of $\sigma\tau$.
\end{proof}

\begin{figure}[t]
\centering\includegraphics[width=0.4\columnwidth]{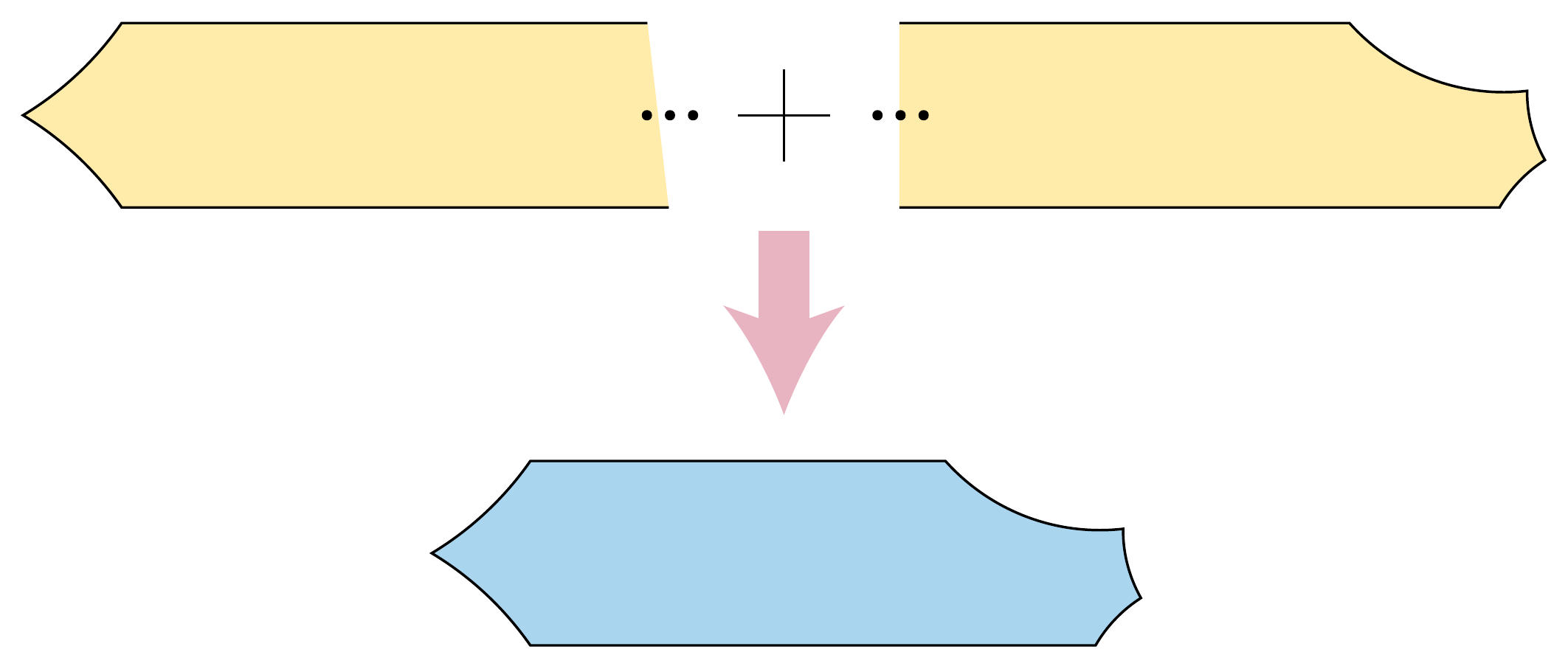}
\caption{Gluing together two simple arc-polygons with cusps at infinity, realizing angle sequences $\sigma0$ and $\tau0$, to produce a single arc-polygon realizing angle sequence~$\sigma\tau$.}
\label{fig:glue-infinite-cusps}
\end{figure}

\begin{theorem}
\label{thm:up-to-pi}
A finite sequence of three or more angles $\theta_i$ can be realized as the interior angles of a simple arc-polygon whenever all angles satisfy $0\le\theta_i\le\pi$ and the sequence is not a permutation of $(0,0,\pi)$.
\end{theorem}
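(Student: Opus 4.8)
The plan is to prove this by induction on the number $n$ of angles in the sequence, using \cref{lem:polygon-base} as the base case and \cref{lem:polygon-glue} as the inductive step, with \cref{obs:kilroy} handling the one troublesome small case. For $n=3$, the claim is exactly \cref{lem:polygon-base}. So assume $n\ge 4$ and that the statement holds for all sequences of length between $3$ and $n-1$ (with the stated exception). Given a sequence $\sigma\tau$ of length $n$ with all angles in $[0,\pi]$ and not a permutation of $(0,0,\pi)$, the goal is to split it into two shorter pieces $\sigma$ and $\tau$ so that $\sigma0$ and $\tau0$ are each realizable, and then invoke \cref{lem:polygon-glue}.

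The natural way to carry out the split: pick any angle of the sequence and cut the cyclic sequence there, so that the chosen angle lies at the ``end'' of one piece and we write the whole thing as a linear sequence $\alpha_1,\alpha_2,\dots,\alpha_n$; then for a suitable index $k$ with $2\le k\le n-1$, set $\sigma=(\alpha_1,\dots,\alpha_k)$ and $\tau=(\alpha_{k+1},\dots,\alpha_n)$. Then $\sigma0$ has length $k+1$ and $\tau0$ has length $n-k+1$, both of which lie strictly between $3$ and $n$ when $2\le k\le n-2$; we also need to allow $k=n-1$, giving $\sigma0$ of length $n$ and $\tau0$ of length $3$, which is fine as long as $\tau0=(\alpha_n,0)$ paired with another — wait, a cleaner bookkeeping is to always keep both pieces of length at least $2$ as angle sequences, so $\sigma0,\tau0$ each have length at least $3$, and each has length at most $n-1$ as long as neither piece is all of $\sigma\tau$. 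Concretely, splitting after the second angle and before the last gives pieces of lengths $2$ and $n-2$, hence $\sigma0,\tau0$ of lengths $3$ and $n-1$. The remaining issue is purely that the shorter pieces, once a $0$ is appended, must avoid being a permutation of $(0,0,\pi)$ and must have all entries in $[0,\pi]$ (automatic, since appending $0$ preserves that).

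The main obstacle is therefore the bookkeeping around the exceptional triple: we must choose the split point so that neither $\sigma0$ nor $\tau0$ is a permutation of $(0,0,\pi)$. Since $\sigma0$ ends in a $0$, it is a permutation of $(0,0,\pi)$ exactly when $\sigma$ itself is a permutation of $(0,\pi)$, i.e. $\sigma=(0,\pi)$ or $\sigma=(\pi,0)$; similarly for $\tau$. So the danger arises only when the length-$2$ piece happens to equal $\{0,\pi\}$ as a multiset. If for some rotation of the cyclic sequence and some valid split this does not occur, we are done by induction. The cases to rule out are those where \emph{every} way of cutting off a consecutive pair equal to $\{0,\pi\}$ forces the other piece also to be bad — but the other piece has length $n-2\ge 2$, and a length-$(n-2)$ sequence with all angles $\le\pi$ is a permutation of $(0,0,\pi)$ only if $n-2=3$, i.e. $n=5$. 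So for $n\ge 6$ the induction goes through immediately once we observe we can always find a consecutive pair that is \emph{not} $\{0,\pi\}$ (if every consecutive pair were $\{0,\pi\}$, the sequence would alternate $0,\pi,0,\pi,\dots$, which is handled directly: for even $n$ this is realized by iterating \cref{obs:kilroy}'s $(0,\pi,0,\pi)$ gluing, and an all-$\{0,\pi\}$ cyclic alternation of odd length is impossible). The small cases $n=4$ and $n=5$ are then finished by hand: for $n=4$ one checks that every admissible sequence other than those already covered by \cref{obs:kilroy} (namely $(0,0,\pi,\pi)$ and $(0,\pi,0,\pi)$) can be split into two pairs neither of which is $\{0,\pi\}$, and for $n=5$ a similar finite check, using \cref{obs:kilroy} plus \cref{lem:polygon-glue} to patch the residual alternating-type configurations. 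I expect the write-up to spend most of its length on exactly this case analysis for $n\in\{4,5\}$ and on the alternating sequence $(0,\pi,0,\pi,\dots,0,\pi)$, with everything else being a short inductive paragraph.
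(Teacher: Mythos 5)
Your inductive skeleton (cut the cyclic sequence, append a cusp angle $0$ to each piece, glue via \cref{lem:polygon-glue}) matches the paper's, but you are missing the one idea that makes the exceptional cases disappear, and as a result your claimed case analysis at $n=4$ is not merely tedious but false. Consider the admissible quadruple $(0,\pi,0,x)$ for any $x\in[0,\pi]$, e.g.\ $(0,\pi,0,0)$: its only decompositions into two consecutive pairs are $\{(0,\pi),(0,x)\}$ and $\{(\pi,0),(x,0)\}$, each of which contains a pair equal to $\{0,\pi\}$, so every candidate gluing produces a forbidden $(0,0,\pi)$ piece; the same holds for $(\pi,0,\pi,x)$. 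So your assertion that every admissible $4$-sequence other than the two of \cref{obs:kilroy} splits into two pairs neither equal to $\{0,\pi\}$ fails, and the failure is not a removable corner case: your treatment of the alternating sequence $(0,\pi,0,\pi,0,\pi)$ forces exactly the uncovered quadruple $(0,\pi,0,0)$ (the $3+3$ split yields $\sigma0=(0,\pi,0,0)$), and for $n\ge 5$ the ``rest'' piece $\tau0$ can itself be one of these quadruples, so the whole induction fails to close. Also, gluing two copies of \cref{obs:kilroy}'s $(0,\pi,0,\pi)$ at cusps does not yield a longer alternating sequence: the glue consumes one $0$ from each piece, so the result is $(\pi,0,\pi,\pi,0,\pi)$, which has adjacent $\pi$'s. (There is additionally an off-by-one in your bookkeeping: the piece that can degenerate to a permutation of $(0,0,\pi)$ is $\tau0$, of length $n-1$, so the dangerous case is $n=4$, not $n=5$.)

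The missing idea, which is how the paper's proof sidesteps all of this, is to handle angles equal to $\pi$ by a different mechanism entirely: a vertex with interior angle $\pi$ is a flat point of the boundary, so one deletes it, realizes the shorter sequence by induction, and reinserts the vertex anywhere along the arc joining its two former neighbors. With this step, every sequence of length $\ge 4$ containing a $\pi$ reduces to a shorter admissible one, except when deletion would leave a permutation of $(0,0,\pi)$ --- which happens only for the two quadruples of \cref{obs:kilroy} --- and the cusp-gluing step is then invoked only for sequences containing no $\pi$ at all, where no piece $\sigma0$ or $\tau0$ can ever be a permutation of $(0,0,\pi)$. To salvage your version you would need direct constructions for the families $(0,\pi,0,x)$ and $(\pi,0,\pi,x)$, which amounts to rediscovering this reinsertion trick.
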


\begin{proof}
We use induction on the number of angles, with \cref{lem:polygon-base} as a base case for sequences of exactly three angles and \cref{obs:kilroy} for the two four-angle-sequences $(0,0,\pi,\pi)$ and $(0,\pi,0,\pi)$ and their cyclic permutations.

For any other sequence of more than three angles that includes $\pi$ as one of its angles, we form a shorter sequence by removing the $\pi$, realize the shorter sequence as the interior angles of a simple arc-polygon by the induction hypothesis, and then reinsert the vertex with angle $\pi$ anywhere along the circular arc between its two neighbors in the resulting arc-polygon.

For a sequence of more than three angles that does not include $\pi$ as an angle, partition the sequence into a concatenation $\sigma\tau$ where both $\sigma$ and $\tau$ contain at least two angles. By the induction hypothesis, both $\sigma0$ and $\tau0$ are realizable, so by \cref{lem:polygon-glue} their concatenation $\sigma\tau$ is also realizable.
\end{proof}

It is tempting to guess that all sequences of four or more angles are realizable by simple arc-polygons, but this is not true, as the following example demonstrates.

\begin{figure}[t]
\centering\includegraphics[width=0.3\columnwidth]{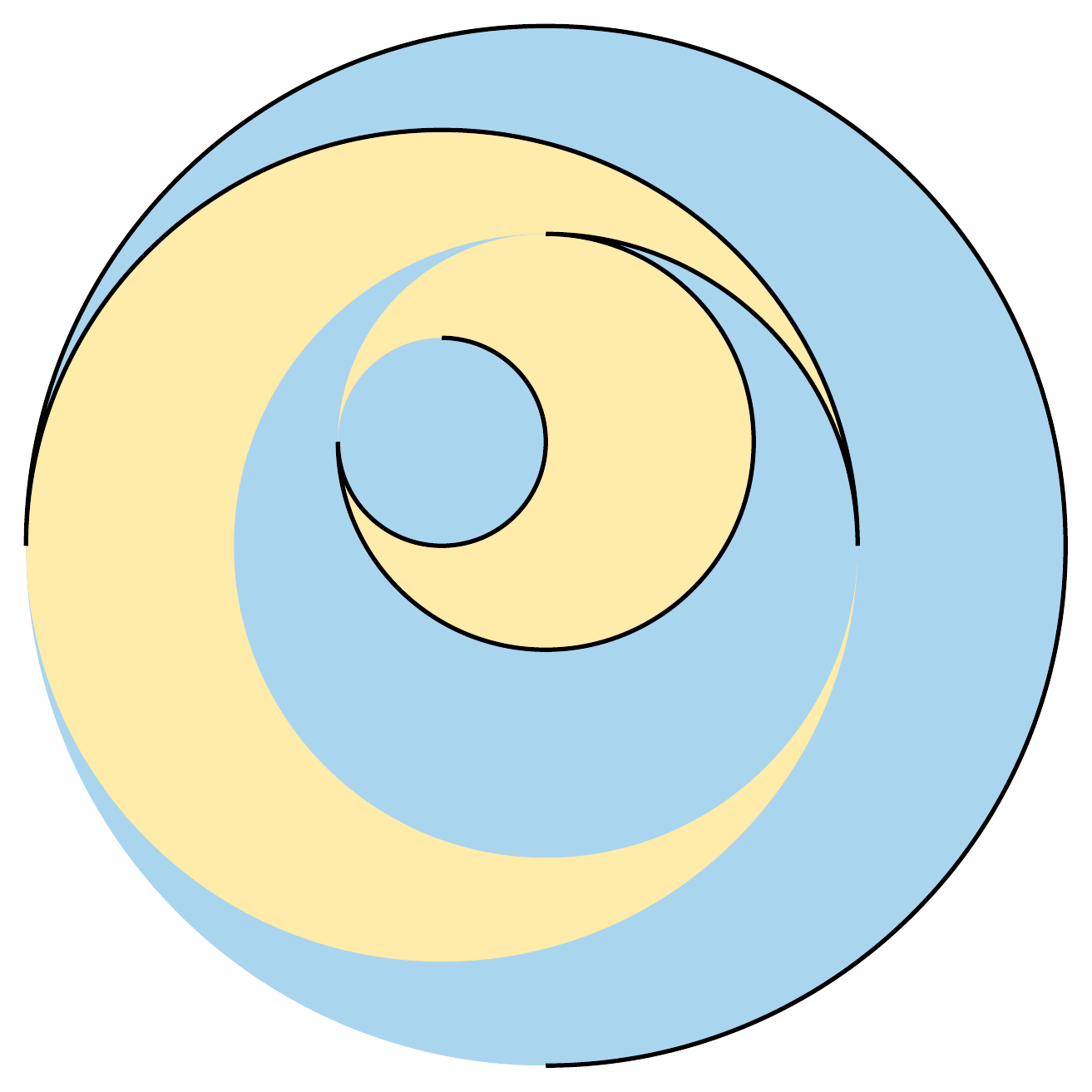}
\caption{Interior angles that alternate between $0$ and $2\pi$ force the arcs of a simple arc-polygon to be confined to a sequence of nested circles.}
\label{fig:moons}
\end{figure}

\begin{observation}
In a simple arc-polygon, if the two interior angles at the vertices of a given arc $A$ are $0$ and $2\pi$, then the two adjacent arcs lie on opposite sides of the circle through $A$.
\end{observation}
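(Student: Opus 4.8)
The plan is to reduce the claim to a local analysis at the two distinguished vertices, tied together by the global fact that the polygon's interior meets the arc $A$ from one fixed side of its supporting circle. Write $u,v$ for the endpoints of $A$, with interior angle $0$ at $u$ and $2\pi$ at $v$; let $\Gamma$ be the circle through $A$, and let $B,C$ be the other arcs incident to $u$ and to $v$. Applying a M\"obius transformation — which preserves every angle and at worst interchanges interior with exterior, hence at worst swaps the roles of $u$ and $v$ — I may assume $\Gamma$ is a straight line, so that its two sides are the two open half-planes it bounds. An interior angle of $0$ at $u$ forces $A$ and $B$ to be tangent at $u$, hence $B$ is tangent to the line $\Gamma$ there; moreover $B$ cannot be contained in $\Gamma$, for then $B$ and $A$ would overlap in a neighborhood of $u$, violating simplicity. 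Thus $B$ meets $\Gamma$ only at $u$, so $B\setminus\{u\}$ lies in a single open half-plane, call it side $s(B)$; symmetrically $C$ is tangent to $\Gamma$ at $v$ and $C\setminus\{v\}$ lies on a single side $s(C)$. (Also $B\neq C$: if they coincided, $\{A,B\}$ would be a bigon, forcing its two interior angles $0$ and $2\pi$ to be equal.) The goal is then to prove $s(B)\neq s(C)$.

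The global step I would establish is that, all along $A$, the interior $\Omega$ of the arc-polygon touches $A$ from one fixed side of $\Gamma$. Since $\operatorname{relint}(A)$ is a connected sub-segment of $\Gamma$, and near each of its points no edge other than $A$ occurs (there are finitely many edges, and only $B$ and $C$ touch $A$, only at the excluded endpoints), near each such point $\Omega$ is locally exactly one of the two half-planes; this choice of side is locally constant on the connected set $\operatorname{relint}(A)$, hence constant — say side $X$.

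Then comes the local step at the cusps. Because the interior angle at $u$ is $0$, in a small ball about $u$ the portion of $\Omega$ is the thin cusp region wedged between $A$ and $B$; this region is connected, it avoids $\Gamma$, and it touches points of $\operatorname{relint}(A)$ arbitrarily near $u$ on side $X$, so it lies entirely on side $X$; since every point of $B$ close to $u$ is a limit of points of this cusp while avoiding $\Gamma$ (an open set's boundary), it follows that $s(B)=X$. At $v$ the interior angle $2\pi$ means that near $v$ it is instead the \emph{exterior} that forms the thin cusp wedged between $A$ and $C$; running the same argument with the exterior, which meets $\operatorname{relint}(A)$ near $v$ from the side $\bar X$ opposite to $\Omega$, yields $s(C)=\bar X$. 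Hence $s(B)=X\neq\bar X=s(C)$, as required (and, as in the figure, this is exactly what confines the arcs of such a polygon to a sequence of nested circles).

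I expect the main obstacle to be the local step: justifying that each cusp region lies on a single definite side of $\Gamma$ and inherits the side $X$ from $\operatorname{relint}(A)$. One point not to overlook is that a small ball about $v$ also contains a sub-ray of $\Gamma$ lying \emph{inside} $\Omega$ rather than in the cusp, which is precisely why at $v$ the argument must be run with the exterior cusp (the one produced by the $2\pi$ angle) and not with $\Omega$ directly. Beyond this bookkeeping, the argument is a routine synthesis of the tangency observations with the connectedness of $\operatorname{relint}(A)$.
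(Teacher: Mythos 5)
Your proof is correct and follows essentially the same route as the paper's: the adjacent arcs are tangent to the circle $\Gamma$ through $A$, a tangent circle meets $\Gamma$ only at the point of tangency and so stays on one side, and the angles $0$ and $2\pi$ force the two arcs onto opposite sides locally. Your interior/exterior cusp analysis is just a careful expansion of the paper's one-sentence claim that ``the choice of angles ensures that, near the point of tangency, the arcs are on opposite sides of the circle.''
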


\begin{proof}
The arcs are tangent to $A$, and two circles that are tangent cannot cross. The choice of angles ensures that, near the point of tangency, the arcs are on opposite sides of the circle, and because they cannot cross this circle they remain on opposite sides for their entire length.
\end{proof}

\begin{corollary}
\label{cor:no-zigzag}
It is not possible for a simple arc-polygon with an even number of sides to have angles that alternate between $0$ and $2\pi$ around the polygon.
\end{corollary}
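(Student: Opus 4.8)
The plan is to argue by contradiction, converting the geometric claim into a short combinatorial impossibility about an oriented cycle, with the preceding Observation as the only geometric input. So suppose some simple arc-polygon $P$ has $2m$ sides ($m\ge 1$) whose interior angles alternate $0,2\pi,0,2\pi,\dots$ around $P$; the case $m=1$ is already impossible, since the two interior angles of a bigon must be equal, so assume $m\ge 2$. Label the vertices $v_0,\dots,v_{2m-1}$ and the arcs $A_0,\dots,A_{2m-1}$ cyclically (indices mod $2m$), with $A_i$ joining $v_i$ and $v_{i+1}$, and let $C_i$ be the circle containing $A_i$ and $D_i$ the open disk it bounds. Two preliminary remarks: because the interior angle at each $v_{i+1}$ is $0$ or $2\pi$, the tangent lines of $A_i$ and $A_{i+1}$ there coincide, so $C_i$ and $C_{i+1}$ are tangent at $v_{i+1}$; and $C_i\ne C_{i+1}$, since two arcs of a single circle that meet at a shared endpoint produce interior angle $\pi$ unless they overlap (violating simplicity). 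Hence each consecutive pair $C_i,C_{i+1}$ is either internally tangent — so one of $D_i,D_{i+1}$ is strictly contained in the other — or externally tangent, with $D_i$ and $D_{i+1}$ having disjoint interiors.

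Next I would feed the alternation into the preceding Observation. Since consecutive vertices of $P$ carry different angles, every arc $A_i$ has one endpoint of angle $0$ and one of angle $2\pi$, so the Observation applies to \emph{every} $A_i$: its two adjacent arcs $A_{i-1}$ and $A_{i+1}$ lie on opposite sides of $C_i$. The translation step — where the bookkeeping lives — is the elementary fact that an arc tangent to $C_i$ at a shared vertex lies, apart from that vertex, strictly inside $D_i$ exactly when its own disk is strictly contained in $D_i$, and strictly outside $D_i$ in every other tangency configuration (whether its disk strictly contains $D_i$ or is disjoint from it). Applying this at $v_i$ to $A_{i-1}$ and at $v_{i+1}$ to $A_{i+1}$, the statement ``$A_{i-1}$ and $A_{i+1}$ lie on opposite sides of $C_i$'' becomes: for every $i$, exactly one of the containments $D_{i-1}\subsetneq D_i$ and $D_{i+1}\subsetneq D_i$ holds.

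The finish is then pure counting. Form the $2m$-cycle on the disks $D_0,\dots,D_{2m-1}$ (consecutive because $A_i,A_{i+1}$ share a vertex) and orient each edge $\{D_i,D_{i+1}\}$ toward the larger disk, leaving it unoriented when the interiors are disjoint. The statement just derived says each $D_i$ has exactly one neighbor strictly contained in it, i.e.\ in-degree exactly $1$; summing in-degrees over the $2m$ vertices shows there are $2m$ oriented edges, hence no unoriented ones and also out-degree exactly $1$ everywhere. But an orientation of a cycle with in-degree and out-degree $1$ at every vertex is one of the two consistent cyclic orientations, which would give $D_0\subsetneq D_1\subsetneq\cdots\subsetneq D_{2m-1}\subsetneq D_0$ — a contradiction. (The evenness hypothesis enters only implicitly: a cyclic sequence alternates between two distinct values only when it has even length, so the odd case is vacuous.) I expect the sole genuine obstacle to be the translation step in the second paragraph — pinning down which of ``$A_{i-1}$ lies inside $D_i$'' corresponds to which disk-containment and dispatching the degenerate external-tangency configuration — and pleasantly the latter falls out of the degree count rather than needing its own geometric argument, while the ``two adjacent arcs cannot cross $C_i$'' content is already handled inside the Observation.
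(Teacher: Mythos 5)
Your argument is correct and takes essentially the same route as the paper's one-sentence proof: the preceding Observation forces the circles of consecutive arcs into a nested sequence, which cannot close up into a cycle. Your in-degree/out-degree count on the cycle of disks is just a careful formalization of the paper's ``impossible to loop back around from the innermost to the outermost circle.''
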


\begin{proof}
The alternation would cause the sides to belong to a sequence of nested circles (\cref{fig:moons}), from which it would be impossible to loop back around from the innermost to the outermost circle.
\end{proof}

\section{Lombardi drawing}
\subsection{Existence}
As stated in the introduction, a \emph{cactus} is a graph in which each edge belongs to at most one cycle. In the natural embedding of a cactus, each cycle forms a face, with the other edges that are incident to each cycle vertex placed outside the face. At a vertex of degree $d\ge 2$, the interior angle of the cycle will be $2\pi/d$. Because all angles are nonzero and at most $\pi$, the sequence of interior angles meets the conditions of \cref{thm:up-to-pi}, allowing the face to be drawn as a simple arc-polygon. Each edge belongs to at most one cycle, and the edges that do not belong to cycles are even easier to draw (on their own) as line segments. To construct a Lombardi drawing for the entire graph, we glue these pieces of drawings together using M\"obius transformations, similar to the gluing process used to construct arc-polygons in \cref{lem:polygon-glue}.

\begin{definition}
If $G$ is an embedded graph, and $H$ is a subgraph of $G$ we define a \emph{partial Lombardi drawing} of $H$ with respect to $G$ to be a drawing of the vertices and edges of $H$ as points and circular arcs, in the manner of a Lombardi drawing, with pairs of circular arcs that meet at a vertex having the angle that they would be required to have in a Lombardi drawing of $G$. We define a partial Lombardi drawing to be \emph{planar} in the same way as for a full Lombardi drawing: the only intersection points of arcs are shared endpoints.
\end{definition}

The \emph{biconnected components} of a cactus are its cycles and its edges that do not belong to cycles. Its \emph{articulation points} are the vertices that belong to more than one biconnected component. The reasoning above proves the following observation:

\begin{observation}
\label{obs:cactus-blocks}
Each biconnected component of a cactus has a planar partial Lombardi drawing with respect to the natural embedding of the cactus.
\end{observation}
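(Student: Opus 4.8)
The plan is to split into the two kinds of biconnected component that a cactus can have: a single edge belonging to no cycle, and a cycle. The edge case is essentially free. If the biconnected component $H$ is a single edge $uv$, draw it as a straight line segment between two arbitrary distinct points (a segment being an allowed degenerate arc). Since $H$ has only one arc incident to each of $u$ and $v$, there are no pairs of arcs meeting at a vertex of $H$, so the angle constraints in the definition of a partial Lombardi drawing are vacuous; and a single segment has no self-intersections, so the drawing is planar.

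The substantive case is a biconnected component that is a (simple) cycle $C = v_1 v_2 \cdots v_k v_1$ with $k \ge 3$. The key step here is to pin down the interior angles that the cycle face is required to have. In the natural embedding of the cactus, $C$ bounds a face and every edge incident to a vertex $v_i$ of $C$ other than the two edges of $C$ lies outside that face; so in a Lombardi drawing of the whole cactus, where the $\deg(v_i)$ edges at $v_i$ are equally spaced at angular gaps of $2\pi/\deg(v_i)$, the two cycle edges at $v_i$ are consecutive in the rotation at $v_i$ and subtend an angle of exactly $2\pi/\deg(v_i)$ on the side of the cycle face. Setting $\alpha_i = 2\pi/\deg(v_i)$, we have $0 < \alpha_i \le \pi$ because $2 \le \deg(v_i) < \infty$. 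Hence $(\alpha_1,\dots,\alpha_k)$ is a cyclic sequence of $k \ge 3$ angles, all in $[0,\pi]$, and it is not a permutation of $(0,0,\pi)$ since every $\alpha_i$ is strictly positive, so \cref{thm:up-to-pi} produces a simple arc-polygon $P$ with interior angles $\alpha_1,\dots,\alpha_k$ in this cyclic order.

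To finish, I would read the partial Lombardi drawing off of $P$: map $v_i$ to the $i$-th vertex of $P$ and the edge $v_i v_{i+1}$ to the corresponding arc, orienting $P$ so that its interior is the cycle face. The interior angle of $P$ at the $i$-th vertex is $\alpha_i = 2\pi/\deg(v_i)$, which is precisely the angle the two cycle edges at $v_i$ must form in a Lombardi drawing of the cactus, and since $P$ is simple its arcs meet only at shared endpoints, so the drawing is planar. I do not expect a genuine obstacle here — the statement is essentially a repackaging of the reasoning preceding it — but the step that most needs care is the identification of the cycle-face angle at $v_i$ with $2\pi/\deg(v_i)$ together with the observation that this value always lies in the half-open interval $(0,\pi]$, so that the single exceptional sequence $(0,0,\pi)$ of \cref{thm:up-to-pi} is never encountered; implicitly we also rely on the fact that cycles of a (simple) cactus have length at least three, which is what makes \cref{thm:up-to-pi}, stated for three or more angles, applicable.
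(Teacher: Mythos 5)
Your proposal is correct and follows essentially the same route as the paper: the paper's justification (given in the paragraph preceding the observation) likewise splits the biconnected components into single edges, drawn as line segments, and cycles, whose face angles $2\pi/\deg(v)\in(0,\pi]$ satisfy the hypotheses of \cref{thm:up-to-pi} and so can be drawn as simple arc-polygons. Your write-up just makes explicit a few details the paper leaves implicit, such as why the sequence $(0,0,\pi)$ never arises.
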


\begin{lemma}
\label{lem:cactus-glue}
Let $H_i$ be a collection of subgraphs of a given graph $G$ that all share a common vertex $v$, with no other pairwise intersections, let $G$ be given a planar embedding in which the edges of each subgraph $H_i$ appear consecutively at $v$, and let each $H_i$ have a planar partial Lombardi drawing with respect to $G$. Then the union $\cup H_i$ also has a planar partial Lombardi drawing with respect to $G$.
\end{lemma}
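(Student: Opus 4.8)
The plan is to bring each given partial Lombardi drawing of $H_i$, by a M\"obius transformation, into a position where it occupies its own angular wedge at $v$, and then to superimpose all of these wedges around a single common copy of $v$, in the cyclic order dictated by the embedding of $G$. Let $d=\deg_G(v)$ and $m_i=\deg_{H_i}(v)$ (an $H_i$ with no edge at $v$ poses no difficulty, so assume each $m_i\ge 1$; if there is only one subgraph there is nothing to prove, so assume there are at least two, whence $m_i\le d-1$). In any Lombardi drawing of $G$ the edges at $v$ occupy $d$ equally spaced ``slots'' of angular width $2\pi/d$, and the $m_i$ edges of $H_i$ at $v$ occupy a consecutive block of $m_i$ of these slots; because the $H_i$ pairwise share only $v$ and their edges appear consecutively at $v$, these slot-blocks are pairwise interior-disjoint. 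Assign to each $H_i$ the closed wedge $W_i$ that is the union of its $m_i$ slots, of angular width $m_i\cdot 2\pi/d<2\pi$; the $W_i$ can be placed around a common apex with pairwise-disjoint interiors and in the cyclic order of the blocks in $G$, leaving gaps for any slots belonging to edges of $G$ not in $\bigcup H_i$.

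The crux is the following claim: the partial Lombardi drawing $D_i$ of $H_i$ can be carried by a M\"obius transformation $\mu_i$ to a drawing contained in the open wedge $W_i$ together with its apex, with the images of the edges at $v$ emanating from the apex in the correct $m_i$ equally spaced directions. To prove it I would, as in \cref{lem:polygon-glue}, first invert about $v$: this sends $v$ to $\infty$, turns each edge of $H_i$ incident to $v$ into a straight ray running to infinity in the direction that edge had at $v$, and leaves the remaining (finitely many) edges and vertices as a bounded set $R$. The difference from \cref{lem:polygon-glue} is that this bounded remainder is not yet under control; the new ingredient is a second inversion, centered at a point $p$ chosen very far away in the direction, as seen from $R$, opposite the middle of the sector of ray directions. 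From such a $p$, the entire inverted drawing — the bounded set $R$ as well as every one of the straight rays, each of which subtends only a bounded range of bearings because it is a genuine ray and not a full line — is seen within a range of bearings that exceeds the $(m_i-1)\cdot 2\pi/d$-wide sector of ray directions by an amount tending to $0$ as $p$ recedes. Taking $p$ far enough makes this range fit inside a wedge of angular width $m_i\cdot 2\pi/d$, positioned so that it coincides with a translate/rotate of $W_i$. Since an inversion fixes the bearing of every point from its center and sends $\infty$ (hence $v$) to $p$, the composite $\mu_i$ sends $D_i$ into such a wedge with apex at $\mu_i(v)$ and with the $v$-edges still leaving in $m_i$ correctly spaced directions; a final rotation then aligns the wedge with $W_i$.

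Granting the claim, I would translate the drawings $\mu_i(D_i)$ so that their apices coincide at one point, which becomes the drawn copy of $v$. These drawings lie in the pairwise interior-disjoint wedges $W_i$ and meet only at this apex; at the apex the edges of distinct $H_i$ lie in distinct slots, so no two of them overlap beyond the shared point. Hence the union is a planar drawing. Moreover M\"obius transformations preserve angles, so each pair of arcs meeting at a common vertex — in particular at $v$, where the slot positions were arranged to be exactly those of a Lombardi drawing of $G$ — forms the angle required by a Lombardi drawing of $G$. Thus $\bigcup H_i$ has a planar partial Lombardi drawing with respect to $G$.

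The step I expect to be the main obstacle is the claim that $D_i$ fits in a wedge of width exactly $m_i\cdot 2\pi/d$. Inverting about $v$ is the natural first move, mirroring \cref{lem:polygon-glue}, but it only controls the drawing near $v$; the rest of the inverted drawing must be coaxed into the same narrow wedge. The points that make this possible are that the edges at $v$ become honest straight rays rather than full lines (so each has a bounded angular footprint from any outside viewpoint) and that a second inversion centered far away in the ``free'' direction compresses the entire bounded remainder into an arbitrarily thin angular sliver while leaving all bearings from that far point unchanged; pinning down how far ``far away'' must be, in order to bring the wedge width down to $m_i\cdot 2\pi/d$ and no larger, is the one quantitative point that requires care.
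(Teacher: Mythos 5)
Your proof is correct and follows essentially the same route as the paper: invert each $H_i$ about $v$ to turn the edges at $v$ into rays, assign each piece an open wedge of width $m_i\cdot 2\pi/d$ around a common apex in the embedding's cyclic order, fit each piece into its wedge, and undo the inversion to recover the correct angles at $v$. The only (inessential) difference is that the paper fits each piece into its wedge by translating it far out along the wedge and then applies a single final inversion at the common origin, whereas you apply a second inversion per piece centered at a far-away point and then translate the resulting finite drawings together; these are two phrasings of the same ``viewed from far enough away in the free direction, the piece subtends a small enough angle'' step.
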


\begin{proof}
Perform an inversion centered at $v$ on each drawing of $H_i$, producing a drawing where $v$ is at $\infty$ and its incident edges have become rays, radiating outward from the finite part of the drawing, with relative angles equal to the angles they should have at $v$ in a Lombardi drawing of $G$.

Let $v$ have degree $d$ in $G$, and draw a system of $d$ rays meeting at the origin in the plane, separated from each other by angles of $2\pi/d$. Assign consecutive subsets of rays to each subgraph $H_i$, according to its number of edges incident to $v$. Additionally assign to each subgraph $H_i$ a wedge of the plane, with the apex as the origin, extending beyond these assigned rays to an additional angle of $\pi/d$ on both sides. This assignment produces open wedges for each $H_i$, within which (after a suitable rotation) each may be placed; we place each $H_i$ so that its rays are parallel to the rays it is assigned, but we do not require these rays to coincide. Because the wedges are open, they are disjoint from each other and from the origin.

Once all of the drawings of subgraphs have been placed in this way, another inversion centered at the origin returns the drawing to a state in which all vertices and arcs are finite. All of the angles within each drawing of each subgraph $H_i$ have been preserved, and the rays to $\infty$ invert to circular arcs meeting at $v$ at the desired angles.
\end{proof}

\begin{theorem}
\label{thm:natural-cactus}
Every cactus has a planar Lombardi drawing for its natural embedding.
\end{theorem}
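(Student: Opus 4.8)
The plan is to assemble the drawing out of the biconnected components of the cactus, drawing each component with \cref{obs:cactus-blocks} and gluing the pieces together along the block--cut tree by repeated application of \cref{lem:cactus-glue}. Since that gluing lemma both consumes and produces \emph{partial} Lombardi drawings with respect to a fixed reference graph, and since a partial Lombardi drawing of a cactus $G$ with respect to $G$ itself is exactly a planar Lombardi drawing of $G$, it suffices to keep $G$ fixed as the reference throughout and to show how to build up such a partial drawing of all of $G$.

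The construction is an induction over the block--cut tree of $G$, carried out from the leaves toward an arbitrarily chosen root block $R$: the claim is that for every block $B$, the union $G_B$ of the blocks lying in the subtree rooted at $B$ has a planar partial Lombardi drawing with respect to $G$. A leaf block is a single cycle or bridge, so the base case is exactly \cref{obs:cactus-blocks}. For the inductive step, let $B$ be a block whose child cut vertices are $v_1,\dots,v_m$; below each $v_j$ hang the child blocks of $v_j$, and by the induction hypothesis the union of each of their subtrees already has a planar partial Lombardi drawing with respect to $G$. One then glues at the cut vertices one at a time: at step $j$, apply \cref{lem:cactus-glue} at the vertex $v_j$ to the partial drawing built so far (which includes $B$ together with everything hung below $v_1,\dots,v_{j-1}$) and to the drawings of the subtrees hung below $v_j$. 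After the $m$-th step the result realizes $G_B$. Taking $B=R$ yields a planar partial Lombardi drawing of $G_R=G$ with respect to $G$, which is the desired drawing.

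The point that needs care --- and essentially the only real step in the argument, given the earlier lemmas --- is verifying the hypotheses of \cref{lem:cactus-glue} at each gluing step, and this is precisely where the tree structure does the work. At the vertex $v_j$, the pieces being combined pairwise intersect only in $v_j$, because distinct branches of the block--cut tree meet only in the cut vertices joining them, and the only edges any of these pieces contributes at $v_j$ are the edges of a single block of $G$ incident to $v_j$; moreover, in the natural embedding the two edges of a cycle through $v_j$ are forced to be consecutive in the rotation at $v_j$ (and a bridge contributes a single edge), so each piece's edges at $v_j$ form a consecutive run, as \cref{lem:cactus-glue} requires. I should also record that \cref{obs:cactus-blocks} does apply: every cycle vertex has degree $d\ge 2$, so the interior angles of a cycle face all lie in $(0,\pi]$ and the angle sequence is in particular never a permutation of $(0,0,\pi)$.

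I do not anticipate a genuine obstacle. All of the geometric content --- realizing a cycle face as a simple arc-polygon, and performing the M\"obius-transformation gluing without introducing crossings --- is already isolated in \cref{thm:up-to-pi}, \cref{obs:cactus-blocks}, and \cref{lem:cactus-glue}; what remains is the bookkeeping of walking up the block--cut tree and confirming at each step that the gluing lemma may be invoked.
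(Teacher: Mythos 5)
Your proposal is correct and follows essentially the same route as the paper: the paper's proof is exactly "draw each biconnected component by \cref{obs:cactus-blocks} and glue at articulation points by \cref{lem:cactus-glue}," stated in two sentences without the explicit block--cut-tree induction or the verification of the gluing hypotheses that you spell out. Your added bookkeeping (consecutiveness of each block's edges at a cut vertex, the non-$(0,0,\pi)$ check) is a faithful elaboration of what the paper leaves implicit.
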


\begin{proof}
We construct planar Lombardi drawings for each biconnected component of the cactus by \cref{obs:cactus-blocks}, and glue these partial drawings into partial drawings for larger subgraphs by applying \cref{lem:cactus-glue} at each articulation point of the cactus, until the entire drawing has been glued together. Each step preserves planarity, so the resulting Lombardi drawing is planar.
\end{proof}

\subsection{Nonexistence}
\begin{figure}
\centering\includegraphics[width=0.4\columnwidth]{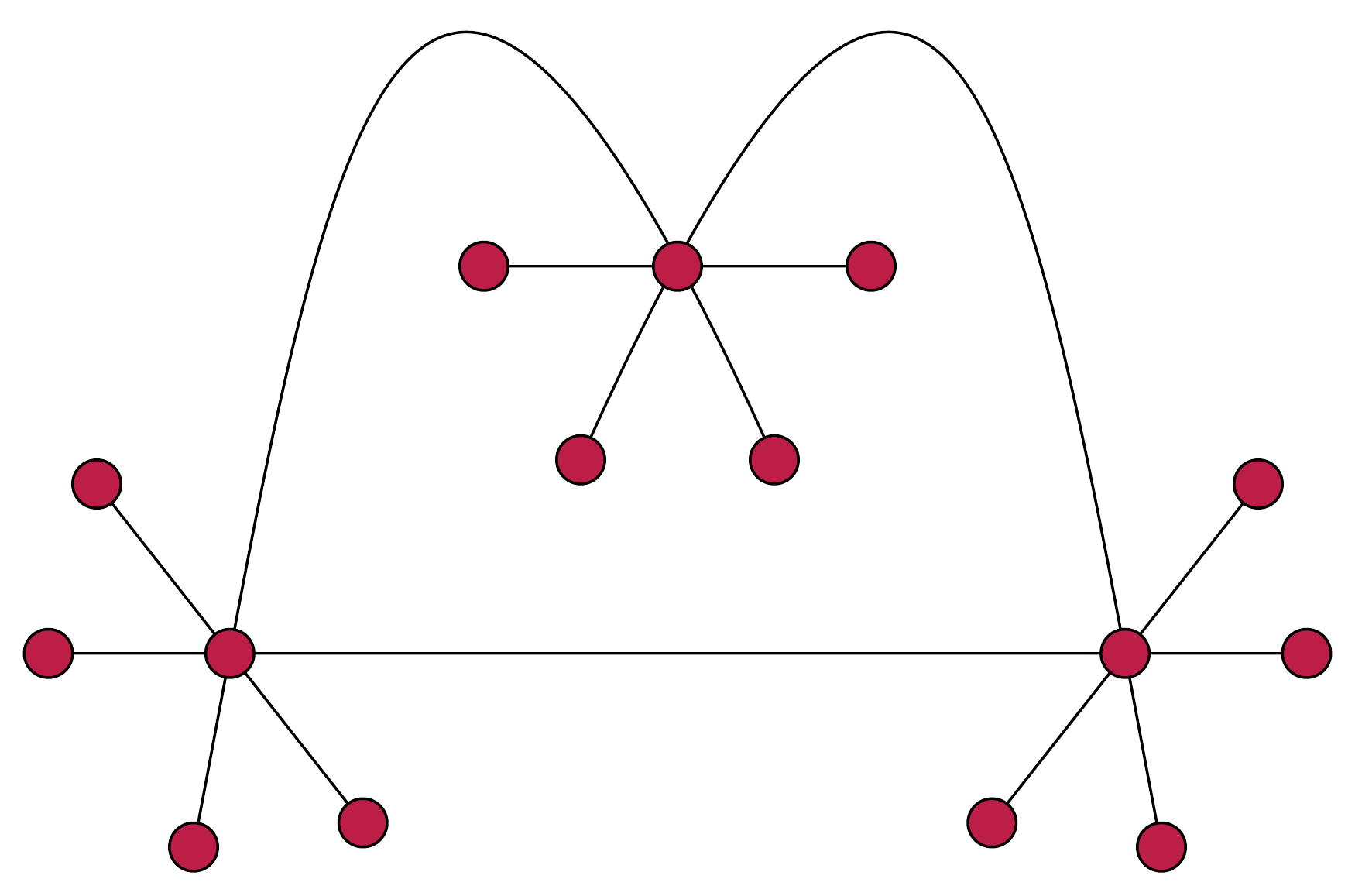}
\caption{An embedded cactus that has no planar Lombardi drawing.}
\label{fig:badhat}
\end{figure}

\begin{theorem}
\label{thm:bad-hat}
The embedded cactus depicted in \cref{fig:badhat}, consisting of a 3-cycle with four pendant vertices at each 3-cycle vertex, embedded so that the pendant vertices are outside the 3-cycle at two 3-cycle vertices and inside at one, has no planar Lombardi drawing.
\end{theorem}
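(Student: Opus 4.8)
The plan is to argue by contradiction: assume a planar Lombardi drawing of this embedded cactus exists, and extract from it a simple arc-triangle whose three interior angles violate the inequalities of \cref{thm:triangle}. In this cactus each 3-cycle vertex has degree $2+4=6$ (two cycle edges plus four pendant edges), so in any Lombardi drawing the consecutive edges at such a vertex are spaced at angles of $2\pi/6=\pi/3$. The key local constraint is this: at a vertex where the four pendant edges lie on one side of the 3-cycle, the two cycle edges are separated (on the side facing the pendants' complement, i.e., the interior of the face bounded by the triangle) by some multiple of $\pi/3$ determined by how many pendant edges fall between them; and similarly at the vertex where the pendants lie inside. I would first pin down, for each of the three 3-cycle vertices, exactly which multiples of $\pi/3$ are available as the interior angle of the triangular face at that vertex, using the embedding (pendants outside at two vertices, inside at one) together with the fact that the four pendant edges at a vertex must occupy four of the six equally spaced slots, and the two cycle edges the remaining two, with the cyclic order around the vertex fixed by the embedding.

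Next I would translate this into a statement about the arc-triangle formed by the three cycle edges. The face of the drawing bounded by the 3-cycle is a simple closed curve made of three circular arcs, hence a simple arc-triangle in the sense of the paper, with interior angles $\theta_0,\theta_1,\theta_2$ each an integer multiple of $\pi/3$. At the two vertices with pendants outside the triangle, all four pendant edges lie in the exterior wedge at that vertex, forcing the two cycle edges to be consecutive among the six slots, so the interior angle there is forced to be small — I expect $\pi/3$ (the pendants occupy a contiguous block of four slots on the outside, leaving the two cycle-edge slots adjacent, with the face between them). At the vertex with pendants inside, the four pendant edges lie inside the triangular face, so they must fit between the two cycle edges within the interior wedge; this forces the interior angle there to be large — at least $5\pi/3$, since five of the six gaps (four between consecutive pendants-or-cycle-edges, plus the two end gaps to the cycle edges) lie inside. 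So I would derive $\theta$-values forced to be (up to the labeling) $\pi/3,\pi/3,5\pi/3$, or possibly a nearby forced triple, and then check via \cref{cor:phi} that the resulting $\psi=(\pi-\sum\theta_i)/2 = (\pi - 7\pi/3)/2 = -2\pi/3$ gives $\phi_i=\psi+\theta_i$ equal to $-\pi/3,-\pi/3,\pi$, so $\phi=\pi$ occurs, violating the strict inequality $\phi_i<\pi$ of \cref{thm:triangle}. Hence no simple arc-triangle with these angles exists, contradicting planarity.

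I would then need to rule out the alternative slot-assignments consistent with the embedding. The main obstacle is the case analysis: at a vertex with pendants outside, the two cycle edges need not be in adjacent slots a priori — a pendant edge could nominally sit "between" them in the cyclic order while still being on the outside — so I must carefully use the embedding to argue that the interior angle of the \emph{face} (as opposed to some other region at the vertex) is determined, and likewise that at the inside vertex the face angle is forced to be exactly the large value rather than merely bounded. Concretely, I would fix the rotation system at each 3-cycle vertex dictated by \cref{fig:badhat}, enumerate the finitely many ways the four pendant edges and two cycle edges can occupy the six $\pi/3$-slots respecting that rotation system, read off the face angle in each case, and verify that every resulting triple $(\theta_0,\theta_1,\theta_2)$ fails the \cref{thm:triangle} inequalities (typically by producing some $\phi_i\ge\pi$ or $\phi_i\le-\pi$). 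This finite check, combined with the observation that the three pendant edges at each vertex drawn on their own are unconstrained (they can always be drawn as short arcs into the assigned wedge without new crossings), shows the obstruction is entirely in the triangular face, completing the contradiction.
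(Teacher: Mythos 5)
Your proposal is correct and follows essentially the same route as the paper: the embedding forces the 3-cycle face to be a simple arc-triangle with interior angles $\pi/3,\pi/3,5\pi/3$, and \cref{cor:phi} then gives $\phi_i=-\pi/3,-\pi/3,\pi$, which violates the strict inequality of \cref{thm:triangle}. The extra case analysis you worry about is not needed, since the fixed embedding determines the rotation system at each vertex and hence the face angles directly, which is all the paper uses.
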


\begin{proof}
A planar Lombardi drawing of this embedded graph would draw its 3-cycle as a simple arc-triangle with angles $\theta_i$ of $\pi/3$, $\pi/3$, and $5\pi/3$. A calculation following \cref{cor:phi} shows that these angles produce corresponding angles $\phi_i$ of $-\pi/3$, $-\pi/3$, and $\pi$, respectively. Since one of the angles $\phi_i$ is $\pi$, \cref{thm:triangle} shows that this simple arc-triangle cannot exist.
\end{proof}

Additional pendant vertices only make the angles farther from realizability, producing an infinite family of examples without planar Lombardi drawings.

\section{Conclusions}

We have completely characterized the triples of internal angles of simple arc-triangles, and proven that for higher-order simple arc-polygons (with one exception) all sequences of angles that are at most $\pi$ are realizable. We have used these results to find planar Lombardi drawings for the natural planar embeddings of all cacti, and to prove that certain other embeddings of cacti do not have planar Lombardi drawings. In particular, this shows that it is possible for an outerplanar graph, embedded in a non-outerplanar way, to fail to have a planar Lombardi drawing.

This work leaves the following questions open for future research:
\begin{itemize}
\item Can we characterize more completely the sequences of angles that can be realized by simple arc-polygons? (\cref{cor:no-zigzag} shows that this question has a nontrivial answer.) What is the computational complexity of finding such a realization?
\item Similarly, what is the computational complexity of determining whether an embedded cactus has a planar Lombardi drawing?
\item In previous work, for trees with fixed embeddings, we were able to prove the existence of Lombardi drawings whose area is a polynomial multiple of the minimum separation between vertices, compared to the exponential area requirement for straight-line drawings with uniformly spaced edges at each vertex~\cite{DunEppGoo-DCG-13}. What are the area requirements for Lombardi drawing of cacti?
\item As already discussed in our previous work on Lombardi drawing, do all outerplanar graphs have planar Lombardi drawings? Do they have planar Lombardi drawings for each outerplanar embedding? What is the computational complexity of finding these drawings?
\end{itemize}


\small
\bibliographystyle{abuser}
\bibliography{lombardi}

\begin{thebibliography}{10}
\urlstyle{rm}

\bibitem{AicAigAur-JGAA-15}
O.~Aichholzer, W.~Aigner, F.~Aurenhammer, K.~{\v{C}}ech~Dobi{\'a}sov{\'a},
  B.~J{\"u}ttler, and G.~Rote.
\newblock {Triangulations with circular arcs}.
\newblock {\em J. Graph Algorithms Appl.} 19(1):43{--}65, 2015,
  \href{http://dx.doi.org/10.7155/jgaa.00346}%
{doi:\nolinkurl{10.7155/jgaa.00346}},
  \href{https://www.ams.org/mathscinet-getitem?mr=3321736}%
{MR3321736}.

\bibitem{AicAurHac-IJCGA-11}
O.~Aichholzer, F.~Aurenhammer, T.~Hackl, B.~J{\"u}ttler, M.~Rabl, and
  Z.~{\v{S}}i{\'r}.
\newblock {Computational and structural advantages of circular boundary
  representation}.
\newblock {\em Internat. J. Comput. Geom. Appl.} 21(1):47{--}69, 2011,
  \href{http://dx.doi.org/10.1142/S0218195911003548}%
{doi:\nolinkurl{10.1142/S0218195911003548}},
  \href{https://www.ams.org/mathscinet-getitem?mr=2777029}%
{MR2777029}.

\bibitem{AngEppFra-JGAA-14}
P.~Angelini, D.~Eppstein, F.~Frati, M.~Kaufmann, S.~Lazard, T.~Mchedlidze,
  M.~Teillaud, and A.~Wolff.
\newblock {Universal point sets for drawing planar graphs with circular arcs}.
\newblock {\em J. Graph Algorithms Appl.} 18(3):313{--}324, 2014,
  \href{http://dx.doi.org/10.7155/jgaa.00324}%
{doi:\nolinkurl{10.7155/jgaa.00324}},
  \href{https://www.ams.org/mathscinet-getitem?mr=3213192}%
{MR3213192}.

\bibitem{BanGib-AMM-94}
T.~Banchoff and P.~Giblin.
\newblock {On the geometry of piecewise circular curves}.
\newblock {\em Amer. Math. Monthly} 101(5):403{--}416, 1994,
  \href{http://dx.doi.org/10.2307/2974900}%
{doi:\nolinkurl{10.2307/2974900}},
  \href{https://www.ams.org/mathscinet-getitem?mr=1272938}%
{MR1272938}.

\bibitem{BanEppGoo-GD-12}
M.~J. Bannister, D.~Eppstein, M.~T. Goodrich, and L.~Trott.
\newblock {Force-directed graph drawing using social gravity and scaling}.
\newblock {\em Proc. 20th Internat. Symp. Graph Drawing (GD 2012)},
  pp.~414{--}425. Springer, Lect. Notes Comput. Sci. 7704, 2012,
  \href{http://dx.doi.org/10.1007/978-3-642-36763-2_37}%
{doi:\nolinkurl{10.1007/978-3-642-36763-2_37}}.

\bibitem{BraCorFie-CGTA-04}
U.~Brandes, S.~Cornelsen, C.~Fie{\ss}, and D.~Wagner.
\newblock {How to draw the minimum cuts of a planar graph}.
\newblock {\em Comput. Geom.} 29(2):117{--}133, 2004,
  \href{http://dx.doi.org/10.1016/j.comgeo.2004.01.008}%
{doi:\nolinkurl{10.1016/j.comgeo.2004.01.008}},
  \href{https://www.ams.org/mathscinet-getitem?mr=2082210}%
{MR2082210}.

\bibitem{CalFerFin-Algs-98}
G.~C{\u{a}}linescu, C.~G. Fernandes, U.~Finkler, and H.~Karloff.
\newblock {A better approximation algorithm for finding planar subgraphs}.
\newblock {\em J. Algorithms} 27(2):269{--}302, 1998,
  \href{http://dx.doi.org/10.1006/jagm.1997.0920}%
{doi:\nolinkurl{10.1006/jagm.1997.0920}},
  \href{https://www.ams.org/mathscinet-getitem?mr=1622397}%
{MR1622397}.

\bibitem{CarHofKus-CG-18}
J.~Cardinal, M.~Hoffmann, V.~Kusters, C.~D. T{\'o}th, and M.~Wettstein.
\newblock {Arc diagrams, flip distances, and Hamiltonian triangulations}.
\newblock {\em Comput. Geom.} 68:206{--}225, 2018,
  \href{http://dx.doi.org/10.1016/j.comgeo.2017.06.001}%
{doi:\nolinkurl{10.1016/j.comgeo.2017.06.001}},
  \href{https://www.ams.org/mathscinet-getitem?mr=3715053}%
{MR3715053}.

\bibitem{CheVenWu-IVC-96}
J.-M. Chen, J.~A. Ventura, and C.-H. Wu.
\newblock {Segmentation of planar curves into circular arcs and line segments}.
\newblock {\em Image Vis. Comput.} 14(1):71{--}83, 1996,
  \href{http://dx.doi.org/10.1016/0262-8856(95)01042-4}%
{doi:\nolinkurl{10.1016/0262-8856(95)01042-4}}.

\bibitem{CheDunGoo-DCG-01}
C.~C. Cheng, C.~A. Duncan, M.~T. Goodrich, and S.~G. Kobourov.
\newblock {Drawing planar graphs with circular arcs}.
\newblock {\em Discrete Comput. Geom.} 25(3):405{--}418, 2001,
  \href{http://dx.doi.org/10.1007/s004540010080}%
{doi:\nolinkurl{10.1007/s004540010080}},
  \href{https://www.ams.org/mathscinet-getitem?mr=1815440}%
{MR1815440}.

\bibitem{CheCunGoo-GD-11}
R.~Chernobelskiy, K.~I. Cunningham, M.~T. Goodrich, S.~G. Kobourov, and
  L.~Trott.
\newblock {Force-directed Lombardi-style graph drawing}.
\newblock {\em Proc. 19th Internat. Symp. Graph Drawing (GD 2011)},
  pp.~320{--}331. Springer, Lect. Notes Comput. Sci. 7034, 2011,
  \href{http://dx.doi.org/10.1007/978-3-642-25878-7_31}%
{doi:\nolinkurl{10.1007/978-3-642-25878-7_31}}.

\bibitem{DonFuXu-CVS-13}
W.~Dong, X.~Fu, G.~Xu, and Y.~Huang.
\newblock {An improved force-directed graph layout algorithm based on aesthetic
  criteria}.
\newblock {\em Comput. Vis. Sci.} 16(3):139{--}149, 2013,
  \href{http://dx.doi.org/10.1007/s00791-014-0228-5}%
{doi:\nolinkurl{10.1007/s00791-014-0228-5}}.

\bibitem{Dud-PhD-16}
P.~M. Dudas.
\newblock {\em {The Impact of Visual Aesthetics on the Utility, Affordance, and
  Readability of Network Graphs}}.
\newblock Ph.D. thesis, University of Pittsburgh, 2016,
  \url{https://d-scholarship.pitt.edu/26607/}.

\bibitem{DunEppGoo-JoCG-18}
C.~A. Duncan, D.~Eppstein, M.~T. Goodrich, S.~G. Kobourov, M.~L{\"o}ffler, and
  M.~N{\"o}llenburg.
\newblock {Planar and poly-arc Lombardi drawings}.
\newblock {\em J. Comput. Geom.} 9(1):328{--}355, 2018,
  \href{http://dx.doi.org/10.20382/jocg.v9i1a11}%
{doi:\nolinkurl{10.20382/jocg.v9i1a11}},
  \href{https://www.ams.org/mathscinet-getitem?mr=3855883}%
{MR3855883}.

\bibitem{DunEppGoo-JGAA-12}
C.~A. Duncan, D.~Eppstein, M.~T. Goodrich, S.~G. Kobourov, and
  M.~N{\"o}llenburg.
\newblock {Lombardi drawings of graphs}.
\newblock {\em J. Graph Algorithms Appl.} 16(1):85{--}108, 2012,
  \href{http://dx.doi.org/10.7155/jgaa.00251}%
{doi:\nolinkurl{10.7155/jgaa.00251}},
  \href{https://www.ams.org/mathscinet-getitem?mr=2872431}%
{MR2872431}.

\bibitem{DunEppGoo-DCG-13}
C.~A. Duncan, D.~Eppstein, M.~T. Goodrich, S.~G. Kobourov, and
  M.~N{\"o}llenburg.
\newblock {Drawing trees with perfect angular resolution and polynomial area}.
\newblock {\em Discrete Comput. Geom.} 49(2):157{--}182, 2013,
  \href{http://dx.doi.org/10.1007/s00454-012-9472-y}%
{doi:\nolinkurl{10.1007/s00454-012-9472-y}},
  \href{https://www.ams.org/mathscinet-getitem?mr=3017904}%
{MR3017904}.

\bibitem{EfrErtKob-JGAA-07}
A.~Efrat, C.~Erten, and S.~G. Kobourov.
\newblock {Fixed-location circular arc drawing of planar graphs}.
\newblock {\em J. Graph Algorithms Appl.} 11(1):145{--}164, 2007,
  \href{http://dx.doi.org/10.7155/jgaa.00140}%
{doi:\nolinkurl{10.7155/jgaa.00140}},
  \href{https://www.ams.org/mathscinet-getitem?mr=2354167}%
{MR2354167}.

\bibitem{Epp-DCG-14}
D.~Eppstein.
\newblock {A M{\"o}bius-invariant power diagram and its applications to soap
  bubbles and planar Lombardi drawing}.
\newblock {\em Discrete Comput. Geom.} 52(3):515{--}550, 2014,
  \href{http://dx.doi.org/10.1007/s00454-014-9627-0}%
{doi:\nolinkurl{10.1007/s00454-014-9627-0}},
  \href{https://www.ams.org/mathscinet-getitem?mr=3257673}%
{MR3257673}.

\bibitem{Epp-CCCG-19}
D.~Eppstein.
\newblock {Bipartite and series-parallel graphs without planar Lombardi
  drawings}.
\newblock {\em Proc. 31st Canad. Conf. Comput. Geom. (CCCG 2019)},
  pp.~17{--}22, 2019.

\bibitem{Fej-BAMS-48}
L.~Fejes~T{\'o}th.
\newblock {Approximation by polygons and polyhedra}.
\newblock {\em Bull. Amer. Math. Soc.} 54:431{--}438, 1948,
  \href{http://dx.doi.org/10.1090/S0002-9904-1948-09022-X}%
{doi:\nolinkurl{10.1090/S0002-9904-1948-09022-X}},
  \href{https://www.ams.org/mathscinet-getitem?mr=24640}%
{MR24640}.

\bibitem{HarUhl-PNAS-53}
F.~Harary and G.~E. Uhlenbeck.
\newblock {On the number of Husimi trees, I}.
\newblock {\em Proc. Natl. Acad. Sci. U.S.A.} 39(4):315{--}322, 1953,
  \href{http://dx.doi.org/10.1073/pnas.39.4.315}%
{doi:\nolinkurl{10.1073/pnas.39.4.315}},
  \href{https://www.ams.org/mathscinet-getitem?mr=0053893}%
{MR0053893}.

\bibitem{HobRic-03}
R.~C. Hobbs and J.~Richards.
\newblock {\em {Mark Lombardi: Global Networks}}.
\newblock Independent Curators International, 2003.

\bibitem{Hus-JCP-50}
K.~Husimi.
\newblock {Note on Mayers' theory of cluster integrals}.
\newblock {\em J. Chem. Phys.} 18(5):682{--}684, 1950,
  \href{http://dx.doi.org/10.1063/1.1747725}%
{doi:\nolinkurl{10.1063/1.1747725}},
  \href{https://www.ams.org/mathscinet-getitem?mr=0038903}%
{MR0038903}.

\bibitem{Kel-BAMS-26}
O.~D. Kellogg.
\newblock {The second edition of the Hurwitz{--}Courant Funktionentheorie}.
\newblock {\em Bull. Amer. Math. Soc.} 32(3):288{--}292, 1926,
  \href{http://dx.doi.org/10.1090/S0002-9904-1926-04215-4}%
{doi:\nolinkurl{10.1090/S0002-9904-1926-04215-4}},
  \href{https://www.ams.org/mathscinet-getitem?mr=1561208}%
{MR1561208}.

\bibitem{KinKobLof-JoCG-19}
P.~Kindermann, S.~G. Kobourov, M.~L{\"o}ffler, M.~N{\"o}llenburg, A.~Schulz,
  and B.~Vogtenhuber.
\newblock {Lombardi drawings of knots and links}.
\newblock {\em J. Comput. Geom.} 10(1):444{--}476, 2019,
  \href{http://dx.doi.org/10.20382/jocg.v10i1a15}%
{doi:\nolinkurl{10.20382/jocg.v10i1a15}},
  \href{https://www.ams.org/mathscinet-getitem?mr=4039890}%
{MR4039890}.

\bibitem{LeiMoi-DCG-10}
T.~Leighton and A.~Moitra.
\newblock {Some results on greedy embeddings in metric spaces}.
\newblock {\em Discrete Comput. Geom.} 44(3):686{--}705, 2010,
  \href{http://dx.doi.org/10.1007/s00454-009-9227-6}%
{doi:\nolinkurl{10.1007/s00454-009-9227-6}},
  \href{https://www.ams.org/mathscinet-getitem?mr=2679063}%
{MR2679063}.

\bibitem{MeeWal-JCAM-95}
D.~S. Meek and D.~J. Walton.
\newblock {Approximating smooth planar curves by arc splines}.
\newblock {\em J. Comput. Appl. Math.} 59(2):221{--}231, 1995,
  \href{http://dx.doi.org/10.1016/0377-0427(94)00029-Z}%
{doi:\nolinkurl{10.1016/0377-0427(94)00029-Z}},
  \href{https://www.ams.org/mathscinet-getitem?mr=1346015}%
{MR1346015}.

\bibitem{PlaTseShy-ICTME-20}
S.~Plankovskyy, Y.~Tsegelnyk, O.~Shypul, A.~Pankratov, and T.~Romanova.
\newblock {Cutting irregular objects from the rectangular metal sheet}.
\newblock {\em Integrated Computer Technologies in Mechanical Engineering},
  pp.~150{--}157. Springer, Adv. Intell. Syst. Comput. 1113, 2020,
  \href{http://dx.doi.org/10.1007/978-3-030-37618-5_14}%
{doi:\nolinkurl{10.1007/978-3-030-37618-5_14}}.

\bibitem{PurHamNol-GD-12}
H.~C. Purchase, J.~Hamer, M.~N{\"o}llenburg, and S.~G. Kobourov.
\newblock {On the Usability of Lombardi Graph Drawings}.
\newblock {\em Proc. 20th Internat. Symp. Graph Drawing (GD 2012)},
  pp.~451{--}462. Springer, Lect. Notes Comput. Sci. 7704, 2012,
  \href{http://dx.doi.org/10.1007/978-3-642-36763-2_40}%
{doi:\nolinkurl{10.1007/978-3-642-36763-2_40}}.

\bibitem{Schu-JGAA-15}
A.~Schulz.
\newblock {Drawing graphs with few arcs}.
\newblock {\em J. Graph Algorithms Appl.} 19(1):393{--}412, 2015,
  \href{http://dx.doi.org/10.7155/jgaa.00366}%
{doi:\nolinkurl{10.7155/jgaa.00366}},
  \href{https://www.ams.org/mathscinet-getitem?mr=3395765}%
{MR3395765}.

\bibitem{Sch-79}
H.~Schwerdtfeger.
\newblock {\em {Geometry of Complex Numbers}}.
\newblock Dover, 1979.

\bibitem{WanLinFan-CAD-17}
Z.-J. Wang, X.~Lin, M.-E. Fang, B.~Yao, Y.~Peng, H.~Guan, and M.~Guo.
\newblock {$\textsc{Re2l}$: an efficient output-sensitive algorithm for
  computing Boolean operations on circular-arc polygons and its applications}.
\newblock {\em Comput.-Aided Des.} 83:1{--}14, 2017,
  \href{http://dx.doi.org/10.1016/j.cad.2016.07.004}%
{doi:\nolinkurl{10.1016/j.cad.2016.07.004}},
  \href{https://www.ams.org/mathscinet-getitem?mr=3577906}%
{MR3577906}.

\bibitem{WeiJutAur-EuroCG-18}
B.~Wei{\ss}, B.~J{\"u}ttler, and F.~Aurenhammer.
\newblock {Mitered offsets and straight skeletons for circular arc polygons}.
\newblock {\em Eur. Worksh. Comput. Geom. (EuroCG 2018)}, pp.~52:1{--}52:6,
  2018,
  \url{https://conference.imp.fu-berlin.de/eurocg18/download/paper_52.pdf}.

\bibitem{XuRooPas-TVCG-12}
K.~Xu, C.~Rooney, P.~J. Passmore, D.~Ham, and P.~H. Nguyen.
\newblock {A user study on curved edges in graph visualization}.
\newblock {\em IEEE Trans. Vis. Comput. Graph.} 18(12):2449{--}2456, 2012,
  \href{http://dx.doi.org/10.1109/TVCG.2012.189}%
{doi:\nolinkurl{10.1109/TVCG.2012.189}}.

\end{thebibliography}

\end{document}